\definecolor{darkblue}{rgb}{.15,0,.7}
\newtheorem{theorem}{Theorem}
\newtheorem*{theorem*}{Theorem}
\newtheorem{lemma}{Lemma}
\newtheorem*{lemma*}{Lemma}
\newtheorem*{claim*}{Claim}
\theoremstyle{definition}
\newtheorem{definition}{Definition}
\theoremstyle{remark}
\renewcommand{\Pr}{\mathbb{P}} 
\newcommand{\expect}{\mathbb{E}} 
\newcommand{\pmf}{pmf}
\newcommand{\naturalnumber}{\mathbb{N}} 
\newcommand{\Databf}{\bm{X}} 
\newcommand{\Data}{X} 
\newcommand{\databf}{\bm{x}} 
\newcommand{\data}{x}
\newcommand{\Signalbf}{\bm{S}}
\newcommand{\Signal}{S}
\newcommand{\signalbf}{\bm{s}}
\newcommand{\signal}{s}
\newcommand{\cost}{g} 
\newcommand{\nonp}{\bot} 
\newcommand{\paymecgenbf}{\bm{R}}
\newcommand{\paymecgen}{R}
\newcommand{\paymecbf}{\bm{R}^{(N,\epsilon)}}
\newcommand{\paymec}{R^{(N,\epsilon)}}
\newcommand{\quality}{\theta} 
\newcommand{\State}{W}
\newcommand{\state}{w}
\newcommand{\Value}{V}
\newcommand{\independent}{\protect\mathpalette{\protect\independenT}{\perp}}
\def\independenT#1#2{\mathord{\rlap{$#1#2$}\mkern3mu{#1#2}}} 
\newcommand{\err}{\tau} 
\newcommand*{\defeq}{\mathrel{\vcenter{\baselineskip0.5ex \lineskiplimit0pt
                     \hbox{\scriptsize.}\hbox{\scriptsize.}}}%
                     =}
\renewcommand{\pmf}{PMF}
\newcommand{\player}{individual}
\newcommand{\players}{individuals}
\newcommand{\Player}{Individual}
\newcommand{\Players}{Individuals}
\newcommand{\principal}{data collector}
\begin{document}
\title{The Value of Privacy: Strategic Data Subjects, Incentive Mechanisms and Fundamental Limits}

\author{
Weina Wang, Lei Ying and Junshan Zhang\\
	{School of Electrical, Computer and Energy Engineering}\\
	{Arizona State University}\\
	{Tempe, AZ 85287}\\
	email: \{weina.wang, lei.ying.2, junshan.zhang\}@asu.edu
}

\maketitle

\begin{abstract}
We study the value of data privacy in a game-theoretic model of trading private data, where a data collector purchases private data from strategic data subjects (\players{}) through an incentive mechanism. The private data of each \player{} represents her knowledge about an underlying state, which is the information that the \principal{} desires to learn. Different from most of the existing work on privacy-aware surveys, our model does not assume the \principal{} to be trustworthy. Then, an \player{} takes full control of its own data privacy and reports only a privacy-preserving version of her data.

In this paper, the value of $\epsilon$ units of privacy is measured by the minimum payment of all nonnegative payment mechanisms, under which an \player{}'s best response at a Nash equilibrium is to report the data with a privacy level of $\epsilon$. The higher $\epsilon$ is, the less private the reported data is. We derive lower and upper bounds on the value of privacy which are asymptotically tight as the number of data subjects becomes large. Specifically, the lower bound assures that it is impossible to use less amount of payment to buy $\epsilon$ units of privacy, and the upper bound is given by an achievable payment mechanism that we designed. Based on these fundamental limits, we further derive lower and upper bounds on the minimum total payment for the \principal{} to achieve a given learning accuracy target, and show that the total payment of the designed mechanism is at most one \player{}'s payment away from the minimum.


\end{abstract}

\section{Introduction}
From the monetary coupons offered for revealing opinions of a product to the large-scale trade of personal information by data brokers such as Acxiom \cite{Kro_14}, the commoditization of private data has been trending up when big data analytics is playing a more and more critical role in advertising, scientific research, etc. However, in the wake of a number of recent scandals, such as the Netflix data breach and the Veterans Affairs data theft, data privacy is emerging as one of the most serious concerns of big data analytics. This raises a fundamental question ``whether big-data and privacy can go hand-by-hand or giving up our privacy is inevitable in the big-data era.'' One common practice of collecting private data is called informed consent. With information on ``who is collecting the data, what data is collected, and how the data will be used,'' data subjects decide upon whether to report data or not. The data collector is supposed to use the data only in the manner disclosed to data subjects. This practice, however, has two fundamental issues: (i) data subjects have no control of data privacy after transferring private data to the data collector; and (ii) the data collector has to take full responsibility of protecting users' private data, which not only costs significant investment on infrastructure and maintenance, but also may lead to reputation damage if data breach occurs. In some applications, such as collecting certain browsing history records to enhance the phishing and malware protection of web browsers \cite{ErlPihKor_14,FanPihErl_15}, the data collectors prefer to avoid holding individuals' raw data for subpoena concerns.

Taking a forward-looking view, we envisage a market model for private data analytics such that private data is treated as a commodity and traded in the market. In particular, the data collector will use an incentive mechanism to pay (or reward) individuals for reporting informative data, and individuals control their own data privacy by reporting noisy data with the appropriate level of privacy protection (or level of noise added) being strategically chosen to maximize their payoffs. A distinctive merit of this privacy protection approach is that data subjects take full control of their own privacy and the data collector gets informative data but does not need to bear the responsibility of protecting data privacy. This differentiates our approach from the existing work \cite{GhoRot_11,FleLyu_12,LigRot_12,RotSch_12,GhoLig_13,NisVadXia_14,GhoLigRot_14}, where the data collector is assumed to be a trustworthy entity who is willing to and has the capability to protect users' privacy.

One significant challenge of the proposed paradigm is that the data collector has no direct control (perhaps no information either) over the quality of reported data. To tackle this challenge, we cast the problem into a game-theoretic setting, which allows us to quantify two fundamental tradeoffs: the tradeoff between cost and accuracy from the data collector's perspective, and the tradeoff between reward and privacy from an individual's perspective (the value of privacy for a data subject). In return, with the reward (incentive) as the bridge, it establishes the tradeoff of data privacy concerned by an individual versus data quality concerned by the data collector.

Specifically, we consider a game-theoretic model of collecting private data in hypothesis testing, where the data collector is interested in learning information from a population of $N$ individuals. An illustration of our model is shown in Figure~\ref{fig:model}. The information is represented by a binary random variable $\State$, which is called the \emph{state}. Each \player{}~$i$ possesses a binary \emph{signal} $\Signal_i$, which is her private data, representing her knowledge about the state $\State$. Conditional on the state $\State$, the signals are independently generated such that the probability for each signal $\Signal_i$ to be the same as $\State$ is $\quality$, where $0.5<\theta<1$. To protect her privacy, an \player{} reports only a privacy-preserving version of her signal, denoted by $X_i$, or chooses to not participate after considering both the payment from the \principal{} and the loss of privacy. The \principal{} needs to decide the amount of payment and the payment mechanism to get informative reports, i.e., not completely random data. Intuitively, the higher the payment is, the more informative the reported data should be. We will answer the following fundamental questions in this paper: \emph{What is the minimum payment needed from the \principal{} to obtain reported data with a privacy level $\epsilon$?} \emph{Which payment mechanism can be used to collect private data with minimum cost?} This setting without accounting for data privacy has garnered much attention in the literature (see, e.g., \cite{MilResZec_09,AceDahLob_11,LeSubBer_14}), including the application of estimating the underlying value of a new technology by eliciting opinions from individuals.

Intuitively, the \principal{} can purchase more informative data (so higher privacy) by offering higher payment. However, the strategic behavior of the privacy-aware \players{} makes this more complicated. Due to privacy concerns, an \player{}'s action\slash strategy is the conditional distributions of the reported data given the realizations of the signal. But the actions of the \players{} are not observable to the \principal{}. Instead, what the \principal{} receives is the reported data, generated randomly according to the \players{}' strategies, so the payments can only be designed based on the reported data. This differs our problem from the conventional mechanism design.

Furthermore, the privacy-aware \players{} weigh the privacy loss against the payment to choose the best quantity of privacy to trade. To make an \player{} willing to trade $\epsilon$ level of privacy, the \principal{} needs to make sure doing this benefits the \player{} most. We reiterate that the \principal{} has access only to the reported data instead of the \players{}' actions. Note that only compensating the privacy cost incurred is not sufficient. The payment mechanism needs to ensure that $\epsilon$ is the best privacy level such that when an \player{} uses a less-private strategy, the decrease in her payment is faster than the decrease in her privacy cost, and similarly, when an \player{} uses a more-private strategy, the increase in her payment is slower than the increase in her privacy cost. In other words, with a game-theoretic approach, we consider an \player{}'s best response in a Nash equilibrium, and the value of data privacy is measured by the minimum payment that makes this equilibrium strategy have a privacy level of $\epsilon$, which represents the monetary value of data privacy in a market for private data.

\subsection*{Summary of Main Results}
It is assumed that \players{} use the celebrated notion of differential privacy \cite{DwoMcSNis_06,Dwo_06} to evaluate their data privacy. When an \player{}~$i$ uses an $\epsilon$-differentially private randomization strategy to generate $\Data_i$, the privacy loss incurred is $\epsilon$, and the \player{}'s cost of privacy loss is a function of $\epsilon$, whose form is assumed to be publicly known. The value of $\epsilon$ units of privacy, denoted by $\Value(\epsilon)$, is measured by the minimum payment of all nonnegative payment mechanisms under which an \player{}'s best response in a Nash equilibrium is to report the data with privacy level $\epsilon$, where nonnegativity ensures that \players{} would not be \emph{charged} for reporting data. We are interested in the range that $\epsilon>0$, simply because when $\epsilon=0$, the reported data is independent of the private data and thus would be of no use for data analysis. Our contributions are summarized as follows:
\begin{enumerate}[leftmargin=1.3em,topsep=1ex,itemsep=0ex]
\item We establish a lower bound on $\Value(\epsilon)$. First we characterize the strategies of \players{} at a Nash equilibrium to prove that from a payment perspective, it suffices to focus on nonnegative payment mechanisms at which the best response of an \player{} in a Nash equilibrium is a symmetric randomized response with a privacy level of $\epsilon$. This strategy generates the reported data by flipping the signal with probability $\frac{1}{e^{\epsilon}+1}$: for convenience, this is called the $\epsilon$-strategy. Next we prove that the expected payments resulting from any Nash equilibrium of any payment mechanism can be ``replicated'' by a genie-aided payment mechanism, where the payments are determined with the aid of a genie who knows the underlying state $\State$. This makes the analysis of the Nash equilibria more tractable by decoupling the \players{} in the payments. The lower bound is then given by necessary conditions for $\epsilon$ to be the best privacy level in the genie-aided mechanism. We remark that although the genie-aided mechanism that achieves the lower bound is not implementable, it can be well-approximated, when the number of \players{} is large, by the feasible payment mechanism that we design to establish the upper bound.

\begin{sloppypar}
\item We observe that the equilibrium strategies exhibit some interesting characteristics: the strategy of an \player{} in a Nash equilibrium is either a symmetric randomized response, which treats the realizations of the private signal symmetrically, or a non-informative strategy, where the reported data is independent of the signal. This characterization holds regardless of the prior distribution of the state, and it also holds for more general probability models of the signals. This characterization advances our understanding of the behavior of privacy-aware \players{}. It is worth pointing out that finding an equilibrium strategy of a privacy-aware \player{} under some payment mechanism involves non-convex optimization.
\end{sloppypar}

\item We prove an upper bound on $\Value(\epsilon)$ by designing a payment mechanism $\paymecbf$, in which the strategy profile consisting of $\epsilon$-strategies constitutes a Nash equilibrium. The expected payment to each \player{} at this equilibrium gives an upper bound on $\Value(\epsilon)$. This upper bound converges to the lower bound exponentially fast as the number of \players{} $N$ becomes large, which indicates that the lower and upper bounds are asymptotically tight.

\item The above fundamental bounds on the value of privacy can be further used to study the \emph{payment--accuracy problem}, where the \principal{} aims to minimize the total payment while achieving an accuracy target in learning the state $\State$. Given an accuracy target $\err$, which can be regarded as the maximum allowable error, let $F(\err)$ denote the minimum total payment for achieving $\err$. We obtain lower and upper bounds on $F(\err)$ based on the lower and upper bounds on the value of privacy. The upper bound is given by the designed mechanism $\paymecbf$ with properly chosen parameters, which shows that the total payment of the designed mechanism is at most one \player{}'s payment away from the minimum.
\end{enumerate}

\section{Related Work}\label{sec:related}
Most existing work on privacy-aware surveys \cite{GhoRot_11,FleLyu_12,LigRot_12,RotSch_12,GhoLig_13,NisVadXia_14,GhoLigRot_14} assumes that there is a trusted data curator or data collector. The private data is either already kept by the \principal{}, or is elicited using mechanisms that are designed with the aim of truthfulness. What the \principal{} purchases is the ``right'' of using individuals' data in an announced way. Our work differs from the existing work by considering a \principal{} who is not trusted by individuals. In this scenario, the \principal{} directly purchases the private data, in which privacy is embedded.

In the seminal work by Ghosh and Roth \cite{GhoRot_11}, individuals' data is already known to the \principal{}, and individuals bid their costs of privacy loss caused by data usage, where each individual's privacy cost is modeled as a linear function of $\epsilon$ if her data is used in an $\epsilon$-differentially private manner. The goal of the mechanism design is to elicit truthful bids of individuals' cost functions, i.e., the coefficients. Subsequent work \cite{FleLyu_12,LigRot_12,RotSch_12,NisVadXia_14} explores various models for individuals' valuation of privacy, especially the correlation between the coefficients and the private bits.

This line of work has been extended to the scenario that the data is not available yet and needs to be reported by the individuals to the \principal{}, but the \principal{} is still trusted \cite{GhoLig_13,Xia_13,CheChoKas_13,GhoLigRot_14}. Notably, Ghosh, Ligett and Roth \cite{GhoLigRot_14} study the model in which the collected data is non-verifiable. The goal of the mechanism design there is to incentivize truthful data reporting (without adding any noise) from individuals. For more work on the interplay between differential privacy and mechanism design, Pai and Roth \cite{PaiRot_13} give a comprehensive survey.

The local model of differential privacy, which is a generalization of randomized response \cite{War_65} and is formalized in \cite{KasLeeNis_11}, has been studied in the literature \cite{DwoMcSNis_06,Dwo_06,HsuKhaRo_12,DucJorWai_13,DwoRot_14,KaiOhVis_14,WanYinZha_14,WanYinZha_15,BasSmi_15,Sho_15}. The hypothesis testing formulation in our paper is similar to a setting in \cite{KaiOhVis_14}, where the authors find an optimal mechanism that maximizes the statistical discrimination of the hypotheses subject to local differential privacy constraint. In practice, Google's Chrome web browser has implemented the RAPPOR mechanism \cite{ErlPihKor_14,FanPihErl_15} to collect users' data, which guarantees that only limited privacy will be leaked by using randomized response in a novel manner. However, users may still not be willing to report data in the desired way due to the lack of an incentive mechanism.

\section{System Model}\label{sec:model}
We consider a single-bit learning problem with privacy-aware \players{} as shown in Figure~\ref{fig:model}. Recall that the \principal{} is interested in learning the state $\State$, which is a binary random variable. For example, the state $\State$ can describe the underlying value of some new technology. Let $P_\State$ denote the prior \pmf\ of $\State$. We assume that $P_\State(1)>0$ and $P_\State(0)>0$.
\begin{figure}[t]
\centering
\includegraphics[scale=0.25]{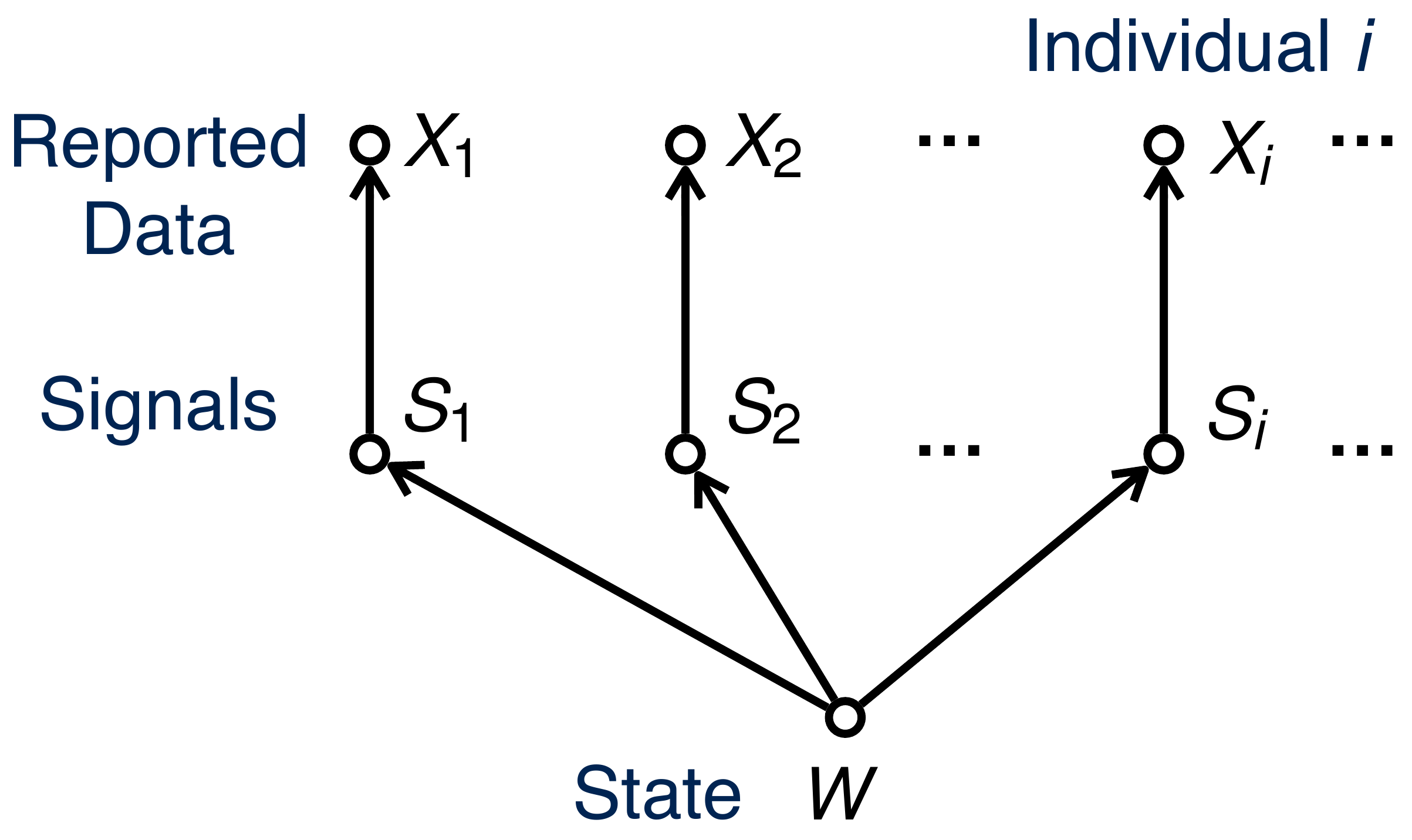}
\caption{Information structure of the model: The \principal{} is interested in the state $\State$, which is a binary random variable. Each \player{}~$i$ possesses her private data, which is a binary signal $\Signal_i$. Conditioned on $\State$, $S_1,S_2,\dots,S_N$ are i.i.d.\ \Player{}~$i$'s reported data is $\Data_i$, which is generated based on $\Signal_i$ using a randomized strategy.}
\label{fig:model}
\end{figure}

\textbf{\Players{} and Strategies.} Consider a population of $N$ \players{} and denote the set of \players{} by $\mathcal{N}=\{1,2,\dots,N\}$. Denote all \players{} other than some given \player{}~$i$ by ``$-i$.'' Each \player{}~$i$ possesses a binary signal $\Signal_i$, which is her private data, reflecting her knowledge about the state $\State$. For example, $\Signal_i$ can represent \player{}~$i$'s opinion towards the new technology. Let $\Signalbf=(\Signal_1,\Signal_2,\cdots,\Signal_N)$. Conditional on the state $\State$, the signals $\Signal_1,\Signal_2,\dots,\Signal_N$ are i.i.d.\ with the following conditional distributions:
\begin{align*}
\Pr(\Signal_i=1\mid \State=1)=\quality,\mspace{18mu}&\Pr(\Signal_i=0\mid \State=1)=1-\quality,\\
\Pr(\Signal_i=0\mid \State=0)=\quality,\mspace{18mu}&\Pr(\Signal_i=1\mid \State=0)=1-\quality,
\end{align*}
where the parameter $\quality$ with $0.5<\quality<1$ is called \emph{the quality of signals}.

Let $\Data_i$ denote the data reported by \player{}~$i$ and let $\Databf=(\Data_1,\Data_2,\dots,\Data_N)$. The acceptable values for reported data are $0$, $1$, and ``nonparticipation.'' So $\Data_i$ takes values in the set $\mathcal{\Data}=\{0,1,\nonp\}$, where $\nonp$ indicates that \player{}~$i$ declines to participate. A strategy of \player{}~$i$ for data reporting is a mapping $\sigma_i\colon\{0,1\}\rightarrow\mathcal{D}(\mathcal{\Data})$, where $\mathcal{D}(\mathcal{\Data})$ is the set of probability distributions on $\mathcal{\Data}$. Let $\bm{\sigma}=(\sigma_1,\sigma_2,\dots,\sigma_N)$. The strategy $\sigma_i$ prescribes a distribution to $\Data_i$ for each possible value of $\Signal_i$, which defines the conditional distribution of $\Data_i$ given $\Signal_i$. Since we will discuss different strategies of \player{}~$i$, we let $\Pr_{\sigma_i}(\Data_i=\data_i\mid \Signal_i=\signal_i)$ with $\data_i\in\mathcal{\Data}$ and $\signal_i\in\{0,1\}$ denote the conditional probabilities defined by strategy $\sigma_i$. If a strategy $\sigma_i$ satisfies that $\Pr_{\sigma_i}(\Data_i=1\mid \Signal_i=1)=\Pr_{\sigma_i}(\Data_i=0\mid \Signal_i=0)$ and $\Pr_{\sigma_i}(\Data_i=\nonp\mid \Signal_i=1)=\Pr_{\sigma_i}(\Data_i=\nonp\mid \Signal_i=0)=0$, we say $\sigma_i$ is a \emph{symmetric randomized response}. If a strategy $\sigma_i$ makes $\Data_i$ and $\Signal_i$ independent, we say $\sigma_i$ is \emph{non-informative}; otherwise we say $\sigma_i$ is \emph{informative}.

\textbf{Mechanism.} The \principal{} uses a payment mechanism $\paymecgenbf\colon\mathcal{\Data}^N\rightarrow \mathbb{R}^N$ to determine the amount of payment to each \player{}, where $\paymecgen_i(\databf)$ is the payment to \player{}~$i$ when the reported data is $\Databf=\databf$. We are interested in payment mechanisms in which the payment to each \player{} is nonnegative, i.e., $\paymecgen_i(\databf)\ge 0$ for any \player{}~$i$ and any $\databf\in\mathcal{\Data}^N$, which we call \emph{nonnegative mechanisms}. This constraint is motivated by the fact that in many practical applications such as surveys, the \principal{} has no means to charge users and can only use payments to incentivize user participation.

\textbf{Privacy Cost.} We quantify the privacy loss incurred when a strategy is in use by the level of (local) differential privacy \cite{DwoMcSNis_06,Dwo_06,KasLeeNis_11,DwoRot_14} of the strategy, defined as follows.

\begin{definition}
The level of (local) differential privacy, or simply the privacy level, of a strategy $\sigma_i$, denoted by $\zeta(\sigma_i)$, is defined to be
\begin{equation*}
\zeta(\sigma_i)=\max\biggl\{\ln\biggl(\frac{\Pr_{\sigma_i}(\Data_i\in\mathcal{E}\mid \Signal_i=\signal_i)}{\Pr_{\sigma_i}(\Data_i\in\mathcal{E}\mid \Signal_i=1-\signal_i)}\biggr)\colon
\mathcal{E}\subseteq \{0,1,\nonp\},\signal_i\in\{0,1\}\biggr\},
\end{equation*}
where we follow the convention that $0\slash 0=1$, and the strategy $\sigma_i$ is said to be $\zeta(\sigma_i)$-differentially private.
\end{definition}

The level of differential privacy quantifies the indistinguishablity between the conditional distributions of the reported data given different values of the signal, therefore measuring how disclosive the strategy is. Note that the amount of privacy leakage quantified by differential privacy is ``in addition'' to what the adversaries already know. We refer the reader to \cite{DwoMcSNis_06} for more semantic implications of differential privacy.

The privacy loss causes a cost to an \player{}. We assume that when using strategies with the same privacy level, \players{} experience the same cost of privacy. Thus, we model each \player{}'s cost of privacy by a function $\cost$ of the privacy level. We call $\cost$ the \emph{cost function} and the cost the \emph{privacy cost}. Our results can be extended to the case where the cost functions are heterogeneous (see the discussion in Section~\ref{sec:hete}). We assume that the form of $\cost$ is publicly known (Ghosh and Roth \cite{GhoRot_11} and subsequent work study the scenario that cost functions are private and design truthful mechanisms to elicit them).

We say the cost function $\cost$ is \emph{proper} if it satisfies the following three conditions:
\begin{align}
&\cost(\xi)\ge 0,\quad\forall \xi\ge 0,\label{eq:proper1}\\
&\cost(0)=0,\label{eq:proper2}\\
&\cost\text{ is non-decreasing},\label{eq:proper3}
\end{align}
where \eqref{eq:proper1} follows from the fact that a privacy cost is nonnegative, \eqref{eq:proper2} indicates that the privacy cost is $0$ when the reported data is independent of the private data, and \eqref{eq:proper3} means that the privacy cost will not decrease when the privacy loss becomes larger. In this paper, we will focus on a proper cost function that is convex, continuously differentiable, and $\cost(\xi)=0$ only for $\xi=0$. With a little abuse of notation, we also use $\cost(\sigma_i)$ to denote $\cost(\zeta(\sigma_i))$, which is the privacy cost to \player{}~$i$ when the strategy $\sigma_i$ is used.

\textbf{Game Formulation and Nash Equilibrium.} In this market model, the \principal{} first announces a payment mechanism. Then this mechanism induces a strategic form game where the \players{} are the players. The utility of each \player{} is the difference between her payment and her privacy cost. We assume that the \players{} are risk neutral, i.e., they are interested in maximizing their expected utility. In this game, the prior distribution $P_{\State}$, the signal quality parameter $\quality$, the form of the payment mechanism $\paymecgenbf$ and the cost function $\cost$ are common knowledge.

We focus on Nash equilibria of a payment mechanism, where each \player{} has no incentive to unilaterally change her strategy given other \players{}' strategies. Formally, a Nash equilibrium in our model is defined as follows.
\begin{definition}
A strategy profile $\bm{\sigma}$ is a Nash equilibrium in a payment mechanism $\paymecgenbf$ if for any \player{}~$i$ and any strategy $\sigma_i'$,
\begin{equation*}
\expect_{\bm{\sigma}}[\paymecgen_i(\Databf)-\cost(\sigma_i)]\ge\expect_{(\sigma_i',\bm{\sigma}_{-i})}[\paymecgen_i(\Databf)-\cost(\sigma_i')],
\end{equation*}
where the expectation is over the reported data $\Databf$, and the subscripts $\bm{\sigma}$ and $(\sigma_i',\bm{\sigma}_{-i})$ indicate that $\Databf$ is generated by the strategy profile $\bm{\sigma}$ and $(\sigma_i',\bm{\sigma}_{-i})$, respectively.
\end{definition}

\section{The Value of Data Privacy}
We say that the \principal{} \emph{obtains} $\epsilon$ units of privacy from an \player{}~$i$ in a payment mechanism if \player{}~$i$'s best response in a Nash equilibrium of the mechanism is to report data with a privacy level of $\epsilon$. Recall that we are interested in the regime $\epsilon>0$ since the \principal{} wants the reported data to be useful for data analysis. Let $\mathcal{\paymecgenbf}(i;\epsilon)$ denote the set of nonnegative payment mechanisms in which the \principal{} obtains $\epsilon$ units of privacy from \player{}~$i$. Then we measure the value of $\epsilon$ units of privacy by the minimum payment to \player{}~$i$ of all mechanisms in $\mathcal{\paymecgenbf}(i;\epsilon)$. Note that this measure does not depend on the specific identity of $i$ due to the symmetry across \players{}. For any mechanism $\paymecgenbf\in\mathcal{\paymecgenbf}(i;\epsilon)$, let $\bm{\sigma}^{(\paymecgenbf;\epsilon)}$ denote the corresponding Nash equilibrium. Then, formally, the value of $\epsilon$ units of privacy is measured by
\begin{equation}\label{eq:value}
\Value(\epsilon)=\inf_{\paymecgenbf\in\mathcal{\paymecgenbf}{(i;\epsilon)}}\expect_{\bm{\sigma}^{(\paymecgenbf;\epsilon)}}[\paymecgen_i(\Databf)].
\end{equation}

In this section, we first derive a lower bound on $\Value(\epsilon)$ by characterizing the Nash equilibria and replicating mechanisms in $\mathcal{\paymecgenbf}(i;\epsilon)$ by genie-aided mechanisms. We then design a payment mechanism in $\mathcal{\paymecgenbf}(i;\epsilon)$, and consequently the equilibrium payment to \player{}~$i$ in this mechanism serves as an upper bound of $\Value(\epsilon)$. The gap between the lower and upper bounds diminishes to zero exponentially fast as the number of \players{} becomes large, which indicates that the lower and upper bounds are asymptotically tight.

\subsection{Lower Bound}\label{sec:lower}
We present a lower bound on $\Value(\epsilon)$ in Theorem~\ref{thm:lower} below. For convenience, we define
\begin{equation}\label{eq:lowerbound}
\Value_{\mathrm{LB}}(\epsilon)=\cost'(\epsilon)\frac{e^{\epsilon}+1}{e^{\epsilon}}\biggl(\frac{\quality}{2\quality-1}(e^{\epsilon}+1)-1\biggr),
\end{equation}
where $\cost'$ is the derivative of the privacy cost function of an \player{} and $\quality$ is the quality of signals.
\begin{theorem}\label{thm:lower}
\begin{sloppypar}
The value of $\epsilon$ units of privacy measured in \eqref{eq:value} for any $\epsilon>0$ is lower bounded as $\Value(\epsilon)\ge\Value_{\mathrm{LB}}(\epsilon)$. Specifically, for any nonnegative payment mechanism $\paymecgenbf$, if the strategy of an \player{}~$i$ in a Nash equilibrium has a privacy level of $\epsilon$ with $\epsilon>0$, then the expected payment to \player{}~$i$ at this equilibrium is lower bounded by $\Value_{\mathrm{LB}}(\epsilon)$.
\end{sloppypar}
\end{theorem}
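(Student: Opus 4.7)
The plan is to combine two reductions---the Nash-equilibrium characterization and a genie-aided replication---to turn the lower bound into a tractable constrained optimization over nonnegative payments, and then to read off $\Value_{\mathrm{LB}}(\epsilon)$ from the first-order optimality condition.

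First I would invoke the Nash-equilibrium characterization (Contribution~2 in the introduction), which forces \player{}~$i$'s best response to be either non-informative or a symmetric randomized response (SRR); since the prescribed privacy level is $\epsilon>0$, $\sigma_i$ must be the SRR that flips the signal with probability $p(\epsilon)=1/(e^\epsilon+1)$. Next I would define the genie-aided payment $\tilde{\paymecgen}_i(\data_i,\state):=\expect_{\bm{\sigma}_{-i}}[\paymecgen_i(\data_i,\Databf_{-i})\mid \State=\state]$. Since $\paymecgen_i\ge 0$ we have $\tilde{\paymecgen}_i\ge 0$, and because $\Signal_1,\dots,\Signal_N$ are conditionally independent given $\State$, conditioning on $\State$ shows that $i$'s expected payment under any unilateral deviation $\sigma_i'$ equals $\sum_{\data_i,\state}\tilde{\paymecgen}_i(\data_i,\state)\,P_{\State}(\state)\,\Pr_{\sigma_i'}(\Data_i=\data_i\mid\State=\state)$. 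This reduces the analysis to a single-\player{} problem whose payment depends only on $\Data_i$ and $\State$.

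Computing the conditional distribution of $\Data_i$ under an SRR with flipping probability $p$ yields $\Pr(\Data_i=\state\mid\State=\state)=\quality-p(2\quality-1)$, so $i$'s expected payment under a deviation to privacy level $\epsilon'$ is the linear function $\expect[\paymecgen_i]=A+p(\epsilon')(2\quality-1)B$, where $A$ and $B$ are fixed linear combinations of the nonnegative quantities $r_{\data_i,\state}:=P_{\State}(\state)\tilde{\paymecgen}_i(\data_i,\state)$. Because the utility $U(\epsilon')=A+p(\epsilon')(2\quality-1)B-\cost(\epsilon')$ is smooth and $\epsilon>0$ is interior, the stationarity condition $U'(\epsilon)=0$, combined with $p'(\epsilon)=-e^\epsilon/(e^\epsilon+1)^2$, pins down $B=-K$ with $K:=\cost'(\epsilon)(e^\epsilon+1)^2/[(2\quality-1)e^\epsilon]$; equivalently, $r_{1,1}+r_{0,0}-r_{0,1}-r_{1,0}=K$.

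Finally I would minimize $\expect[\paymecgen_i]=A-p(\epsilon)(2\quality-1)K$ subject to the $B$-constraint and $r_{\data_i,\state}\ge 0$. Rewriting $A=\quality(r_{1,1}+r_{0,0})+(1-\quality)(r_{0,1}+r_{1,0})=(r_{0,1}+r_{1,0})+\quality K$ shows $A\ge\quality K$, with equality attained at $r_{0,1}=r_{1,0}=0$ and $r_{1,1}+r_{0,0}=K$. Hence $\expect[\paymecgen_i]\ge K\bigl[\quality-p(\epsilon)(2\quality-1)\bigr]$, and a short algebraic simplification using $p(\epsilon)=1/(e^\epsilon+1)$ verifies that this expression equals $\Value_{\mathrm{LB}}(\epsilon)$. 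The main obstacle is Step~1: the per-\player{} best-response problem over the full family of conditional distributions $\Pr(\Data_i\mid\Signal_i)$ is genuinely non-convex because the privacy cost enters through a maximum of log-ratios, so the SRR-or-non-informative dichotomy---proved separately---is what legitimizes the one-parameter linear parameterization used above; once that characterization is in hand, the remaining steps are essentially bookkeeping.
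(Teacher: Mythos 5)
Your proposal follows essentially the same route as the paper's proof: the SRR-or-non-informative dichotomy (the paper's Lemma~\ref{lem:equi-char}), a genie-aided replication identical to the paper's Lemma~\ref{lem:genie} (your $\tilde{\paymecgen}_i(\data_i,\state)$ is exactly the paper's $\overline{\paymecgen}_i(\data_i;\state)$, and your conditional-independence justification matches theirs), the first-order condition at the interior point $\epsilon$ to pin down the linear combination $r_{1,1}+r_{0,0}-r_{0,1}-r_{1,0}$, and nonnegativity of the off-diagonal terms $r_{1,0},r_{0,1}$ to close the bound. Your algebra checks out and the final expression $K\bigl[\quality-p(\epsilon)(2\quality-1)\bigr]$ does simplify to $\Value_{\mathrm{LB}}(\epsilon)$.

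The one inaccuracy is the assertion that a privacy level of $\epsilon$ forces the flipping probability to be $1/(e^{\epsilon}+1)$: the symmetric randomized response with flipping probability $e^{\epsilon}/(e^{\epsilon}+1)$ also has privacy level $\epsilon$, since the level is a maximum of \emph{absolute} log-ratios. The paper devotes Lemma~\ref{lem:positive-eps} to reducing this case to the former by relabeling $\Data_i\mapsto 1-\Data_i$. In your framework the fix is immediate and needs no new mechanism: if the equilibrium is the anti-correlated SRR, stationarity gives $B=+K$ instead of $-K$, and the mirror computation yields $A\ge(1-\quality)K$ and the same lower bound $K(\quality e^{\epsilon}+1-\quality)/(e^{\epsilon}+1)$. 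So this is an omission to be patched rather than a step that fails.
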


We remark that the lower bound in Theorem~\ref{thm:lower} can be achieved by a hypothetical payment mechanism in which a genie who knows the realization of the underlying state $\State$ guides the \principal{} on how much to pay each \player{}. Intuitively, the knowledge of the state $\State$ provides more information about the system, which helps the \principal{} to obtain privacy with less payment. While it may sound like a chicken-and-egg problem as the \principal{}'s sole purpose of paying \players{} for their private data is to learn the state $\State$, it will become clear that the philosophy carries over and the \principal{} should utilize the best estimate of $\State$ in the payment mechanism to minimize the payment. The insight we gain from this mechanism sheds light on the asymptotically tight upper bound on the value of privacy in Section~\ref{sec:upper}.

This genie-aided payment mechanism, denoted by $\widehat{\paymecgenbf}^{(\epsilon)}$, determines the payment to each \player{}~$i$ based on her own reported data $\Data_i$ and the state $\State$ as follows:
\begin{equation}\label{eq:hypo-mec}
\widehat{\paymecgen}^{(\epsilon)}_i(\Data_i,\State)=\frac{\cost'(\epsilon)(e^{\epsilon}+1)^2}{2e^{\epsilon}}\widehat{A}_{\Data_i,\State},
\end{equation}
where
\begin{gather*}
\widehat{A}_{1,1}=\frac{1}{(2\quality-1)P_{\State}(1)},\quad \widehat{A}_{0,0}=\frac{1}{(2\quality-1)P_{\State}(0)},\\
\widehat{A}_{0,1}=\widehat{A}_{1,0}=0.
\end{gather*}
In this mechanism, it can be proved that the best response of individual~$i$ is the following symmetric randomized response, denoted by $\sigma_i^{(\epsilon)}$, which is $\epsilon$-differentially private:
\begin{equation*}
\begin{split}
&\Pr_{\sigma^{(\epsilon)}_i}(\Data_i=1\mid \Signal_i=1)=\Pr_{\sigma^{(\epsilon)}_i}(\Data_i=0\mid \Signal_i=0)=\frac{e^{\epsilon}}{e^{\epsilon}+1},\\
&\Pr_{\sigma^{(\epsilon)}_i}(\Data_i=1\mid \Signal_i=0)=\Pr_{\sigma^{(\epsilon)}_i}(\Data_i=0\mid \Signal_i=1)=\frac{1}{e^{\epsilon}+1},\\
&\Pr_{\sigma^{(\epsilon)}_i}(\Data_i=\nonp\mid \Signal_i=1)=\Pr_{\sigma^{(\epsilon)}_i}(\Data_i=\nonp\mid \Signal_i=0)=0.
\end{split}
\end{equation*}
For convenience, we will refer to this strategy as the \emph{$\epsilon$-strategy}. The expected payment to \player{}~$i$ at this strategy equals to the lower bound in Theorem~\ref{thm:lower}.

Next we sketch the proof of Theorem~\ref{thm:lower}. We first give three lemmas that form the basis of the proof, and then present the proof based on that. The proofs of these lemmas are presented in Appendix~\ref{app:lem:equi-char}--\ref{app:lem:genie}.

\subsubsection{Characterization of Nash Equilibria} We first characterize \players{}' behavior in a Nash equilibrium. In general, an $\epsilon$-differentially private strategy has uncountably many possible forms. However, provided that the strategy is part of a Nash equilibrium (i.e., a best response of an \player{}), the following lemma substantially reduces the space of possibilities. We remark that a similar phenomenon for privacy-aware \players{} has been observed in \cite{CheSheVad_14} in a different setting.

\begin{lemma}\label{lem:equi-char}
In any nonnegative payment mechanism, an \player{}'s strategy in a Nash equilibrium is either a symmetric randomized response, or a non-informative strategy.
\end{lemma}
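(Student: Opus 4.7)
The plan is to analyze individual $i$'s best response as a constrained linear optimization and classify its vertices. Parameterize $\sigma_i$ by $u_x=\Pr_{\sigma_i}(\Data_i=x\mid\Signal_i=1)$ and $v_x=\Pr_{\sigma_i}(\Data_i=x\mid\Signal_i=0)$ for $x\in\{0,1,\nonp\}$. Expanding $\expect_{(\sigma_i,\bm{\sigma}_{-i})}[\paymecgen_i(\Databf)]$ by conditioning first on $\State$ and then on $\Signal_i$, the expected payment becomes a linear form $L(\sigma_i)=\sum_{x}(\beta_{x,1}u_x+\beta_{x,0}v_x)$ whose coefficients $\beta_{x,s}\ge 0$ are determined by $\paymecgenbf$ and $\bm{\sigma}_{-i}$. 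Since the differential-privacy bound on singletons implies the one on arbitrary subsets by linearity, the privacy level simplifies to $\zeta(\sigma_i)=\max_{x}\lvert\ln(u_x/v_x)\rvert$, so $U_i(\sigma_i)=L(\sigma_i)-\cost(\zeta(\sigma_i))$ is a linear objective with a nonconvex privacy penalty.

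Let $\epsilon^\star=\zeta(\sigma_i^\star)$. Because $\cost$ is non-decreasing, $\sigma_i^\star$ must maximize the linear function $L$ over the polytope
\[
\Pi(\epsilon^\star)=\bigl\{(u,v)\ge 0:\,e^{-\epsilon^\star}v_x\le u_x\le e^{\epsilon^\star}v_x\,\forall x,\,\textstyle\sum_x u_x=\sum_x v_x=1\bigr\},
\]
so it suffices to show that every vertex of $\Pi(\epsilon^\star)$ is either a symmetric randomized response or a non-informative strategy. A counting argument on active constraints (for each $x$, at most one of the two ratio bounds can be tight unless $u_x=v_x=0$) shows that a vertex places positive mass on at most two of the three outputs, giving three vertex families: (i)~single-output/deterministic strategies, all of which are non-informative; (ii)~pairwise symmetric strategies on $\{0,1\}$ with reciprocal ratios $(e^{\epsilon^\star},e^{-\epsilon^\star})$, exactly the symmetric randomized responses of the lemma; and (iii)~analogous pairwise symmetric strategies on $\{0,\nonp\}$ or $\{1,\nonp\}$.

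The main obstacle is ruling out family~(iii). My plan is to combine two comparisons. First, pair a candidate $\sigma$ with its ``signal-swap'' $\tilde{\sigma}$ obtained by exchanging the conditional distributions at $\Signal_i=0$ and $\Signal_i=1$; both lie in family~(iii) at the same privacy level, and their arithmetic average $\bar{\sigma}$ is non-informative with $L(\bar{\sigma})=\tfrac{1}{2}(L(\sigma)+L(\tilde{\sigma}))$. The Nash condition $U_i(\sigma)\ge U_i(\bar{\sigma})$ then forces $L(\sigma)-L(\tilde{\sigma})\ge 2\cost(\epsilon^\star)>0$, constraining the asymmetry of the induced $\beta$-coefficients. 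Second, relabel $\nonp$ in $\sigma$ once as $0$ (producing a genuine symmetric randomized response $\sigma'$ in family~(ii)) and once as $1$ (producing the deterministic ``always output~$1$'' strategy); both relabellings lie in $\Pi(\epsilon^\star)$ with privacy level $\le\epsilon^\star$. Using the structural ratio bound $\beta_{x,1}/\beta_{x,0}\in[(1-\quality)/\quality,\,\quality/(1-\quality)]$ that the signal-noise model forces on the $\beta$-coefficients (together with $\beta\ge 0$), I aim to show that one of these relabellings dominates $\sigma$ in $L$ whenever the previous asymmetry inequality holds, yielding a contradiction unless $\sigma$ was already in family~(i) or~(ii). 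The delicate technical step will be to close the gap in the second comparison tightly enough to absorb the $2\cost(\epsilon^\star)$ margin and to handle equality cases via a small perturbation within $\Pi(\epsilon^\star)$.
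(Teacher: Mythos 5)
Your setup is sound: the linearization of the expected payment with nonnegative coefficients, the observation that the singleton ratio bounds already control all subsets so that $\zeta$ reduces to $\max_x\lvert\ln(u_x/v_x)\rvert$, the deduction that an equilibrium strategy must maximize the linear form $L$ over the sublevel polytope $\Pi(\epsilon^\star)$, and the counting of active constraints showing that every vertex of $\Pi(\epsilon^\star)$ is supported on at most two outputs are all correct, and this polytope framing is a legitimate alternative to the paper's direct perturbation arguments. But the proof is not complete, for two reasons. First, ``it suffices to show that every vertex is of the desired form'' is a non sequitur: the equilibrium strategy is \emph{a} maximizer of $L$ over $\Pi(\epsilon^\star)$, not necessarily a vertex; it could a priori sit in the relative interior of an optimal face spanned by several vertices, and a convex combination of a symmetric randomized response and a non-informative strategy is in general neither. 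This can be repaired (every point of the optimal face must have privacy cost exactly $\cost(\epsilon^\star)$, and strict monotonicity of $\cost$ plus quasi-convexity of $\zeta$ then collapses the face to a single vertex), but you do not supply that bridge.

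Second, and more seriously, the elimination of your family~(iii) --- the informative vertices supported on $\{0,\nonp\}$ or $\{1,\nonp\}$ --- is exactly where the substantive work lies, and you leave it as a plan whose key step you yourself flag as unresolved (``I aim to show \dots the delicate technical step will be to close the gap''). As written, nothing guarantees that the signal-swap inequality $L(\sigma)-L(\tilde{\sigma})\ge 2\cost(\epsilon^\star)$ and the ratio bound on the coefficients together force one of your two relabellings to dominate; this is an unexecuted claim, not a proof. The paper disposes of precisely these configurations by a much more direct argument: it first shows that at an optimum the two participation probabilities $\Pr(\Data_i\neq\nonp\mid\Signal_i=1)$ and $\Pr(\Data_i\neq\nonp\mid\Signal_i=0)$ must coincide (raising the smaller one increases the nonnegative linear payment without increasing any of the log-ratios defining $\zeta$), and then that their common value must be $1$ (rescaling both conditional laws by the same factor again increases the payment and preserves the relevant ratios). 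Your family-(iii) vertices have unequal participation probabilities and are killed already by the first step, with no need for ratio bounds on the coefficients or for absorbing a $2\cost(\epsilon^\star)$ margin. You also omit the degenerate cases in which some of the payment coefficients vanish (where the best response is determined by cost minimization alone and is non-informative); the paper treats these separately before the main analysis. Until the family-(iii) elimination and the vertex-reduction step are actually carried out, this remains a sketch with genuine gaps rather than a proof.
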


We remark that Lemma~\ref{lem:equi-char} holds for more general probability models of the signals. The proof carries over as long as the support of the joint distribution of the signals is the entire domain $\{0,1\}^N$.

By Lemma~\ref{lem:equi-char}, if an \player{}'s strategy in a Nash equilibrium has a privacy level of $\epsilon$, where $\epsilon>0$, this equilibrium strategy is either the $\epsilon$-strategy or the $(-\epsilon)$-strategy. The following lemma says that from the payment perspective, it suffices to further focus on the case that it is the $\epsilon$-strategy.
\begin{lemma}\label{lem:positive-eps}
For any nonnegative payment mechanism $\paymecgenbf$ in which the strategy profile $(\sigma_i^{(-\epsilon)},\bm{\sigma}_{-i})$ with some $\epsilon>0$ is a Nash equilibrium, there exists another nonnegative payment mechanism $\paymecgenbf'$ in which $(\sigma_i^{(\epsilon)},\bm{\sigma}_{-i})$ is a Nash equilibrium, and the expected payment to each \player{} at these two equilibria of the two mechanisms are the same.
\end{lemma}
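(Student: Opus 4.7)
My plan is to construct $\paymecgenbf'$ from $\paymecgenbf$ by a simple relabeling of \player{}~$i$'s reports: swap the roles of $0$ and $1$ in the $i$-th coordinate of the reported data vector, while leaving $\nonp$ fixed. More precisely, define the involution $\phi\colon\mathcal{\Data}\to\mathcal{\Data}$ by $\phi(0)=1$, $\phi(1)=0$, $\phi(\nonp)=\nonp$, and set
\begin{equation*}
\paymecgen'_j(\data_i,\databf_{-i})\defeq\paymecgen_j(\phi(\data_i),\databf_{-i})\quad\text{for every \player{}~}j\text{ and every }\databf\in\mathcal{\Data}^N.
\end{equation*}
Nonnegativity of $\paymecgenbf'$ is immediate from that of $\paymecgenbf$. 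The key observation is that the $\epsilon$-strategy and the $(-\epsilon)$-strategy differ only by the output relabeling $\phi$: if $\Data_i$ is generated from $\Signal_i$ by $\sigma_i^{(\epsilon)}$, then $\phi(\Data_i)$ has exactly the conditional distribution prescribed by $\sigma_i^{(-\epsilon)}$, and this correspondence extends to any strategy $\sigma_i''$ of \player{}~$i$, associating to it a ``flipped'' strategy $\tilde{\sigma}_i''$ with the same privacy level and hence the same privacy cost.

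Using this, I would prove the equilibrium condition in $\paymecgenbf'$ by transporting any deviation back to $\paymecgenbf$. For \player{}~$i$, an arbitrary strategy $\sigma_i''$ against $\bm{\sigma}_{-i}$ in $\paymecgenbf'$ yields the same expected utility as $\tilde{\sigma}_i''$ against $\bm{\sigma}_{-i}$ in $\paymecgenbf$, because applying $\phi$ inside $\paymecgen_i$ (by definition of $\paymecgen'_i$) exactly compensates for flipping the strategy, and the privacy costs match. Applying this in particular to $\sigma_i''=\sigma_i^{(\epsilon)}$ (which maps to $\tilde{\sigma}_i''=\sigma_i^{(-\epsilon)}$) and invoking the Nash property of $(\sigma_i^{(-\epsilon)},\bm{\sigma}_{-i})$ in $\paymecgenbf$ gives optimality of $\sigma_i^{(\epsilon)}$ in $\paymecgenbf'$. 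For any \player{}~$j\neq i$, a deviation $\sigma_j''$ under $\paymecgenbf'$ induces the joint report $(\Data_i,\Databf_{-i})$, and the expected payment to $j$ equals the expected payment under $\paymecgenbf$ when $i$ uses $\sigma_i^{(-\epsilon)}$ (since $\phi(\Data_i)$ in the $\paymecgenbf'$ world has the same joint law with $\Databf_{-i}$ as $\Data_i$ does in the $\paymecgenbf$ world with $\sigma_i^{(-\epsilon)}$); again the Nash property in $\paymecgenbf$ transfers. The equality of expected equilibrium payments to every \player{} then follows by setting $\sigma_i''=\sigma_i^{(\epsilon)}$ in the same computation.

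The only subtlety is keeping the distributional identities straight: I need to check that for \emph{every} \player{}~$j$ (including $j=i$) and every choice of deviator's strategy, the two random vectors $(\phi(\Data_i),\Databf_{-i})$ under $(\sigma_i^{(\epsilon)},\bm{\sigma}_{-i})$ and $(\Data_i,\Databf_{-i})$ under $(\sigma_i^{(-\epsilon)},\bm{\sigma}_{-i})$ have the same joint distribution, which reduces to the fact that $\phi$ applied pointwise to the output of $\sigma_i^{(\epsilon)}$ gives $\sigma_i^{(-\epsilon)}$ (since $\Signal_i$ is independent of the strategies of the other \players{} conditional on $\State$, and hence independent of $\Databf_{-i}$ given $\State$). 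Once this distributional identity is recorded, the rest is routine bookkeeping. The main conceptual point, and the one I expect to spell out most carefully, is that the relabeling $\phi$ gives a privacy-level-preserving bijection on \player{}~$i$'s strategy space, which is what lets the Nash property in $\paymecgenbf$ transfer cleanly to $\paymecgenbf'$.
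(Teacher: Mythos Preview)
Your proposal is correct and takes essentially the same approach as the paper: both construct $\paymecgenbf'$ by swapping $0$ and $1$ in \player{}~$i$'s coordinate (the paper writes $\paymecgenbf'(\data_i,\databf_{-i})=\paymecgenbf(1-\data_i,\databf_{-i})$, which is your $\phi$ without the explicit $\nonp$ clause), and both verify the Nash property by transporting deviations back to $\paymecgenbf$. The only stylistic difference is that the paper carries out the verification by expanding expected utilities as $K_1p_1+K_0p_0+L_1q_1+L_0q_0-\cost(\zeta)$ and showing the coefficient swaps $K'_{1,i}=L_{1,i}$, $L'_{1,i}=K_{1,i}$, etc., whereas you argue directly via the distributional identity $(\phi(\Data_i),\Databf_{-i})\overset{d}{=}(\Data_i,\Databf_{-i})$ under the corresponding flipped strategies; both arguments are equivalent and yours is arguably cleaner.
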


This lemma is proved by considering the payment mechanism $\paymecgenbf'$ that is constructed by applying $\paymecgenbf$ on the reported data after modifying $\Data_i$ to $1-\Data_i$.

\subsubsection{Genie-Aided Payment Mechanism}
A genie-aided payment mechanism $\widehat{\paymecgenbf}\colon\mathcal{\Data}^N\times\{0,1\}\rightarrow\mathbb{R}^N$ determines the payment to an \player{} based on not only the reported data $\Databf$ but also the underlying state $\State$. Compared with a standard payment mechanism, a genie-aided mechanism is hypothetical since the \principal{} has access to the underlying state, as if she were aided by a genie. We consider nonnegative genie-aided payment mechanisms where $\widehat{\paymecgen}_i(\Databf,\State)$, the payment to \player{}~$i$, depends on only her own reported data $\Data_i$ and the underlying state $\State$. We write $\widehat{\paymecgen}_i(\Data_i,\State)$ to represent $\widehat{\paymecgen}_i(\Databf,\State)$ for conciseness. Therefore, for each individual~$i$, a genie-aided mechanism makes use of the information of $\State$ but discards the information in $\Databf_{-i}$. The following lemma shows that the expected payments resulting from any Nash equilibrium of any payment mechanism can be replicated by a genie-aided payment mechanism with the same Nash equilibrium. Thus we can restrict our attention to genie-aided mechanisms to obtain a lower bound on the value of privacy.

\begin{lemma}\label{lem:genie}
For any nonnegative payment mechanism $\paymecgenbf$ and any Nash equilibrium $\bm{\sigma}$ of it, there exists a nonnegative genie-aided mechanism $\widehat{\paymecgenbf}$, such that $\bm{\sigma}$ is also a Nash equilibrium of $\widehat{\paymecgenbf}$ and the expected payment to each \player{} at this equilibrium is the same under $\paymecgenbf$ and $\widehat{\paymecgenbf}$.
\end{lemma}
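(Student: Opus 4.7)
The plan is to construct the genie-aided mechanism directly by taking a conditional expectation of the original payment. Specifically, given the Nash equilibrium $\bm{\sigma}$ of $\paymecgenbf$, I would define for each \player{}~$i$
\begin{equation*}
\widehat{\paymecgen}_i(\data_i,\state)\defeq \sum_{\databf_{-i}\in\mathcal{\Data}^{N-1}} \Pr_{\bm{\sigma}_{-i}}(\Databf_{-i}=\databf_{-i}\mid \State=\state)\,\paymecgen_i(\data_i,\databf_{-i}),
\end{equation*}
which depends only on \player{}~$i$'s reported value $\data_i$ and the state $\state$, and is nonnegative because it is a convex combination of nonnegative numbers. In words, the genie-aided payment to \player{}~$i$ is what she would, on average, have earned under $\paymecgenbf$ if $\Databf_{-i}$ were drawn from the equilibrium conditional distribution given the realized state.

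The key structural observation is the conditional independence $\Databf_{-i}\independent(\Signal_i,\Data_i)\mid \State$. This holds because, by the signal model, $\Signal_1,\dots,\Signal_N$ are i.i.d.\ given $\State$, and each $\Data_j$ is a randomization of $\Signal_j$ alone through $\sigma_j$; crucially, this conditional independence is robust to any unilateral deviation of \player{}~$i$, since such a deviation changes only the conditional law of $\Data_i$ given $\Signal_i$, leaving the joint law of $(\State,\Signalbf_{-i})$ and hence of $\Databf_{-i}\mid \State$ untouched. Consequently, for any strategy $\sigma_i'$ of \player{}~$i$, I would compute
\begin{align*}
\expect_{(\sigma_i',\bm{\sigma}_{-i})}[\paymecgen_i(\Databf)]
&=\sum_{\state,\data_i}\Pr_{\sigma_i'}(\State=\state,\Data_i=\data_i)\sum_{\databf_{-i}}\Pr_{\bm{\sigma}_{-i}}(\Databf_{-i}=\databf_{-i}\mid \State=\state)\,\paymecgen_i(\data_i,\databf_{-i})\\
&=\expect_{(\sigma_i',\bm{\sigma}_{-i})}[\widehat{\paymecgen}_i(\Data_i,\State)],
\end{align*}
where the factorization of the joint probability in the first line is exactly the conditional independence above.

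From this single identity everything follows: taking $\sigma_i'=\sigma_i$ gives the equal-expected-payment conclusion, and comparing the identity for a deviation $\sigma_i'$ against the one for $\sigma_i$ shows that the expected-utility differences under $\paymecgenbf$ and $\widehat{\paymecgenbf}$ are identical (the privacy cost $\cost(\sigma_i')$ depends only on \player{}~$i$'s own strategy, so it cancels). Hence the Nash-equilibrium inequality inherited from $\paymecgenbf$ transfers verbatim to $\widehat{\paymecgenbf}$, which gives the claim. The main thing to get right is the conditional-independence step and the fact that it survives arbitrary deviations $\sigma_i'$; once that is in place, the rest is a clean bookkeeping argument with no optimization involved.
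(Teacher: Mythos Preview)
Your proposal is correct and is essentially the same construction as the paper's: both define $\widehat{\paymecgen}_i(\data_i,\state)=\expect_{\bm{\sigma}_{-i}}[\paymecgen_i(\data_i,\Databf_{-i})\mid \State=\state]$ and use the conditional independence of $\Databf_{-i}$ from $(\Signal_i,\Data_i)$ given $\State$ to show that expected payments coincide under any unilateral deviation $\sigma_i'$. The only cosmetic difference is that the paper routes the argument through the $(p_1,p_0,q_1,q_0)$ parametrization of $\sigma_i'$ and matches the coefficients $K_1,K_0,L_1,L_0$, whereas you establish the expected-payment identity directly; the underlying logic is identical.
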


This lemma is proved by constructing the following genie-aided payment mechanism $\widehat{\paymecgenbf}$ according to the desired equilibrium $\bm{\sigma}$: for any \player{}~$i$ and any $\data_i\in\mathcal{\Data},\state\in\{0,1\}$,
\begin{equation*}
\widehat{\paymecgen}_i(\data_i,\state)=\overline{\paymecgen}_i(\data_i;\state)\defeq\expect_{\bm{\sigma}}[\paymecgen_i(\Databf)\mid \Data_i=\data_i,\State=\state].
\end{equation*}
Our intuition is as follows. A genie-aided mechanism can use the state $\State$ to generate an incentive to \player{}~$i$, which ``mimics'' the incentive provided by the reported data $\Databf_{-i}$ of others. The above genie-aided payment mechanism $\widehat{\paymecgenbf}$ is constructed such that no matter what strategy \player{}~$i$ uses,
her expected utility is the same under $\paymecgenbf$ and $\widehat{\paymecgenbf}$. Since an \player{} calculates her best response according to the expected utility, her equilibrium behavior and expected payment are the same under $\widehat{\paymecgenbf}$ and $\paymecgenbf$. We remark that the Nash equilibria of a genie-aided mechanism are much easier to analyze since
the \players{} are decoupled in the payments and thus an \player{}'s strategy does not have an influence on other \players{}' utility.

Let $\widehat{\mathcal{\paymecgenbf}}(i;\epsilon)$ denote the set of nonnegative genie-aided payment mechanisms in which the $\epsilon$-strategy is an \player{}~$i$'s strategy in a Nash equilibrium, and let $\sigma_i^{(\epsilon)}$ denote the $\epsilon$-strategy. Consider
\begin{equation*}
\widehat{\Value}(\epsilon)=\inf_{\widehat{\paymecgenbf}\in\widehat{\mathcal{\paymecgenbf}}(i;\epsilon)}\expect_{\sigma_i^{(\epsilon)}}\Bigl[\widehat{\paymecgen}_i(\Data_i,\State)\Bigr],
\end{equation*}
which is a definition similar to the value of $\epsilon$ units of privacy, $\Value(\epsilon)$, measured in \eqref{eq:value}. Then $\widehat{\Value}(\epsilon)\le \Value(\epsilon)$ for the following reasons. Consider any $\paymecgenbf\in\mathcal{\paymecgenbf}(i;\epsilon)$, i.e., any nonnegative payment mechanism $\paymecgenbf$ in which \player{}~$i$'s strategy in a Nash equilibrium has a privacy level of $\epsilon$. With Lemma~\ref{lem:equi-char} and \ref{lem:positive-eps}, we can assume without loss of generality that this equilibrium strategy is the $\epsilon$-strategy. Then by Lemma~\ref{lem:genie}, we can map $\paymecgenbf$ to a $\widehat{\paymecgenbf}\in\widehat{\mathcal{\paymecgenbf}}(i;\epsilon)$, such that
\begin{equation*}
\expect_{\bm{\sigma}^{(\paymecgenbf;\epsilon)}}[\paymecgen_i(\Databf)]=\expect_{\sigma_i^{(\epsilon)}}\Bigl[\widehat{\paymecgen}_i(\Data_i,\State)\Bigr].
\end{equation*}
Therefore, the infimum over $\widehat{\mathcal{\paymecgenbf}}(i;\epsilon)$ is no greater than the infimum over $\mathcal{\paymecgenbf}(i;\epsilon)$, i.e., $\widehat{\Value}(\epsilon)\le \Value(\epsilon)$.

\subsubsection{Proof of Theorem~\ref{thm:lower}}
With Lemma~\ref{lem:equi-char}, \ref{lem:positive-eps} and \ref{lem:genie}, we can prove the lower bound in Theorem~\ref{thm:lower} by focusing on the genie-aided mechanisms in $\widehat{\mathcal{\paymecgenbf}}(i;\epsilon)$. Then there is no need to consider the strategies of \players{} other than \player{}~$i$ since a genie-aided mechanism pays \player{}~$i$ only according to $\Data_i$ and $\State$. A necessary condition for the $\epsilon$-strategy to be a best response of \player{}~$i$ is that $\epsilon$ yields no worse expected payment than other privacy levels. We utilize this necessary condition to obtain a lower bound on the expected payment to \player{}~$i$, which gives a lower bound on $\widehat{\Value}(\epsilon)$ and further proves the lower bound in Theorem~\ref{thm:lower}.

\begin{proof}[Proof of Theorem~\ref{thm:lower}]
By Lemma~\ref{lem:equi-char}, \ref{lem:positive-eps} and \ref{lem:genie}, it suffices to focus on nonnegative genie-aided payment mechanisms in which the $\epsilon$-strategy is an \player{}~$i$'s strategy in a Nash equilibrium, i.e., mechanisms in $\widehat{\mathcal{\paymecgenbf}}(i;\epsilon)$. Consider any $\widehat{\paymecgenbf}\in\widehat{\mathcal{\paymecgenbf}}(i;\epsilon)$ and denote the $\epsilon$-strategy by $\sigma_i^{(\epsilon)}$. Consider the $\xi$-strategy of \player{}~$i$ with any $\xi\ge 0$ and denote it by $\sigma_i^{(\xi)}$. Then the expected utility of \player{}~$i$ at the strategy $\sigma_i^{(\xi)}$ can be written as
\begin{align*}
&\mspace{23mu}\expect_{\sigma_i^{(\xi)}}\Bigl[\widehat{\paymecgen}_i(\Data_i,\State)\Bigr]-\cost(\sigma_i^{(\xi)})\\
&=\sum_{\data_i,\signal_i,\state}\Pr_{\sigma_i^{(\xi)}}(\Data_i=\data_i\mid \Signal_i=\signal_i)\Pr(\Signal_i=\signal_i,\State=\state)\widehat{\paymecgen}_i(\data_i,\state)\\
&\mspace{23mu}-\cost(\xi),\\
&=\overline{K}_1\frac{e^{\xi}}{e^{\xi}+1}+\overline{K}_0\frac{1}{e^{\xi}+1}+\overline{K}-\cost(\xi),
\end{align*}
where
\begin{align*}
\overline{K}_1&=\{\widehat{\paymecgen}_i(1,1)P_{\State}(1)\quality+\widehat{\paymecgen}_i(1,0)P_{\State}(0)(1-\quality)\}\\
&\mspace{23mu}-\{\widehat{\paymecgen}_i(0,1)P_{\State}(1)\quality+\widehat{\paymecgen}_i(0,0)P_{\State}(0)(1-\quality)\},\\
\overline{K}_0&=\{\widehat{\paymecgen}_i(1,1)P_{\State}(1)(1-\quality)+\widehat{\paymecgen}_i(1,0)P_{\State}(0)\quality\}\\
&\mspace{23mu}-\{\widehat{\paymecgen}_i(0,1)P_{\State}(1)(1-\quality)+\widehat{\paymecgen}_i(0,0)P_{\State}(0)\quality\},\\
\overline{K}&=\widehat{\paymecgen}_i(0,1)P_{\State}(1)+\widehat{\paymecgen}_i(0,0)P_{\State}(0).
\end{align*}
It can be seen that $\overline{K}_1$, $\overline{K}_0$ and $\overline{K}$ do not depend on $\xi$. Let this expected utility define a function $f$ of $\xi$; i.e.,
\begin{equation*}
f(\xi)=\overline{K}_1\frac{e^{\xi}}{e^{\xi}+1}+\overline{K}_0\frac{1}{e^{\xi}+1}-\cost(\xi)+\overline{K}.
\end{equation*}
Then a necessary condition for the $\epsilon$-strategy to be an equilibrium strategy is that $\epsilon$ maximizes $f(\xi)$, which implies that $f'(\epsilon)=0$ since $\epsilon>0$. Since
\begin{equation*}
f'(\xi)=(\overline{K}_1-\overline{K}_0)\frac{e^{\xi}}{(e^{\xi}+1)^2}-\cost'(\xi),
\end{equation*}
setting $f'(\epsilon)=0$ yields that
\begin{equation}\label{eq:K1K2diff}
\overline{K}_1-\overline{K}_0=\cost'(\epsilon)\frac{(e^{\epsilon}+1)^2}{e^{\epsilon}}.
\end{equation}

Now we calculate the expected payment to \player{}~$i$ at the $\epsilon$-strategy:
\begin{align*}
\expect_{\sigma_i^{(\epsilon)}}\Bigl[\widehat{\paymecgen}_i(\Data_i,\State)\Bigr]=-(\overline{K}_1-\overline{K}_0)\frac{1}{e^{\epsilon}+1}+(\overline{K}_1+\overline{K}).
\end{align*}
By definition,
\begin{align*}
\overline{K}_1+\overline{K}&=\widehat{\paymecgen}_i(1,1)P_{\State}(1)\quality+\widehat{\paymecgen}_i(1,0)P_{\State}(0)(1-\quality)\\
&\mspace{23mu}+\widehat{\paymecgen}_i(0,1)P_{\State}(1)(1-\quality)+\widehat{\paymecgen}_i(0,0)P_{\State}(0)\quality,
\end{align*}
and
\begin{align*}
\overline{K}_1-\overline{K}_0&=\bigl(\widehat{\paymecgen}_i(1,1)-\widehat{\paymecgen}_i(0,1)\bigr)P_{\State}(1)(2\quality-1)\\
&\mspace{23mu}+\bigl(\widehat{\paymecgen}_i(0,0)-\widehat{\paymecgen}_i(1,0)\bigr)P_{\State}(0)(2\quality-1).
\end{align*}
Therefore,
\begin{align*}
\overline{K}_1+\overline{K}&=\frac{\quality}{2\quality-1}(\overline{K}_1-\overline{K}_0)\\
&\mspace{23mu}+\widehat{\paymecgen}_i(1,0)P_{\State}(0)+\widehat{\paymecgen}_i(0,1)P_{\State}(1)\\
&\ge\frac{\quality}{2\quality-1}(\overline{K}_1-\overline{K}_0)\\
&=\cost'(\epsilon) \frac{(e^{\epsilon}+1)^2}{e^{\epsilon}}\frac{\theta}{2\theta-1},
\end{align*}
where we have used the nonnegativity of $\widehat{\paymecgenbf}$. Then the expected payment to \player{}~$i$ is bounded as follows:
\begin{align}
&\mspace{23mu}\expect_{\sigma_i^{(\epsilon)}}\Bigl[\widehat{\paymecgen}_i(\Data_i,\State)\Bigr]\nonumber\\
&=-(\overline{K}_1-\overline{K}_0)\frac{1}{e^{\epsilon}+1}+(\overline{K}_1+\overline{K})\nonumber\\
&\ge\cost'(\epsilon) \frac{e^{\epsilon}+1}{e^{\epsilon}}\biggl(\frac{\quality}{2\quality-1}(e^{\epsilon}+1)-1\biggr),\label{eq:ineq-lower}
\end{align}
which proves the lower bound.
\end{proof}

Now beyond the proof, we take a moment to check when this lower bound can be achieved. To achieve the lower bound, we need the equality in \eqref{eq:ineq-lower} to hold and the equation \eqref{eq:K1K2diff} to be satisfied, which is equivalent to the following conditions:
\begin{gather}
\widehat{\paymecgen}_i(1,0)=0,\label{eq:hypo-mec10}\\
\widehat{\paymecgen}_i(0,1)=0,\label{eq:hypo-mec01}\\
(2\quality-1)\Bigl(\widehat{\paymecgen}_i(1,1)P_{\State}(1)+\widehat{\paymecgen}_i(0,0)P_{\State}(0)\Bigr)\nonumber\\
=\cost'(\epsilon)\frac{(e^{\epsilon}+1)^2}{e^{\epsilon}}.\label{eq:necessary-lower-gene}
\end{gather}
It is easy to check that the genie-aided payment mechanism $\widehat{\paymecgenbf}^{(\epsilon)}$ defined in \eqref{eq:hypo-mec} is in $\widehat{\mathcal{\paymecgenbf}}(i;\epsilon)$ and satisfies \eqref{eq:hypo-mec10}--\eqref{eq:necessary-lower-gene}, and therefore achieves the lower bound. Can this lower bound be achieved by a standard nonnegative payment mechanism? Consider any payment mechanism $\paymecgenbf\in\mathcal{\paymecgenbf}
(i;\epsilon)$. Following similar arguments, we can prove that to achieve the lower bound, $\paymecgenbf$ needs to satisfy the following conditions:
\begin{gather}
\overline{\paymecgen}_i(1;0)=0,\label{eq:necessary-lower1}\\
\overline{\paymecgen}_i(0;1)=0,\label{eq:necessary-lower2}\\
(2\quality-1)\bigl(\overline{\paymecgen}_i(1;1)P_{\State}(1)+\overline{\paymecgen}_i(0;0)P_{\State}(0)\bigr)\nonumber\\
=\cost'(\epsilon)\frac{(e^{\epsilon}+1)^2}{e^{\epsilon}},\label{eq:necessary-lower3}
\end{gather}
where recall that $\overline{\paymecgen}_i(\data_i;\state)=\expect_{\bm{\sigma}^{(\paymecgenbf,\epsilon)}}[\paymecgen_i(\Databf)\mid \Data_i=\data_i,\State=\state]$ for $\data_i,\state\in\{0,1\}$. It can be proved that if $\paymecgenbf$ satisfies \eqref{eq:necessary-lower1} and \eqref{eq:necessary-lower2}, then $\paymecgen_i(\databf)=0$ for any $\databf\in\mathcal{\Data}^N$, which contradicts \eqref{eq:necessary-lower3}. Therefore, no standard nonnegative payment mechanism can achieve the lower bound. However, as will be shown in the next section, we can design a class of standard nonnegative payment mechanisms such that the expected payment approaches the lower bound as the number of \players{} increases. The design follows the insights indicated by the genie-aided mechanism $\widehat{\paymecgenbf}^{(\epsilon)}$: to minimize the payment, the \principal{} should utilize the best estimate of $\State$ in the payment mechanism based on the noisy reports.

\subsection{Upper Bound}\label{sec:upper}
We present an upper bound on $\Value(\epsilon)$ in Theorem~\ref{thm:upper} below. For convenience, we define
\begin{equation}\label{eq:d}
d=\frac{1}{2}\ln\frac{(e^{\epsilon}+1)^2}{4(\quality e^{\epsilon}+1-\quality)((1-\quality)e^{\epsilon}+\quality)},
\end{equation}
where $\quality$ is the quality of signal. Note that $d>0$ for any $\epsilon>0$. Recall that $\Value_{\mathrm{LB}}(\epsilon)$ is the lower bound in Theorem~\ref{thm:lower}.

\begin{theorem}\label{thm:upper}
\begin{sloppypar}
The value of $\epsilon$ units of privacy measured in \eqref{eq:value} is upper bounded as $\Value(\epsilon)\le\Value_{\mathrm{LB}}(\epsilon)+O(e^{-Nd})$, where the $O(\cdot)$ is for $N\rightarrow\infty$. Specifically, there exists a nonnegative payment mechanism $\paymecbf$ in which the strategy profile $\bm{\sigma}^{(\epsilon)}$ consisting of $\epsilon$-strategies is a Nash equilibrium, and the expected payment to each \player{} $i$ at this equilibrium is upper bounded by $\Value_{\mathrm{LB}}(\epsilon)+O(e^{-Nd})$.
\end{sloppypar}
\end{theorem}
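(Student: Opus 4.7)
I would build $\paymecbf$ as a ``de-genie-fied'' version of the mechanism $\widehat{\paymecgenbf}^{(\epsilon)}$ in \eqref{eq:hypo-mec}: replace the unobservable state $\State$ by an estimate $\widehat{\State}_{-i}(\databf_{-i})$ computed from the other \players{}' reports. Under the target equilibrium $\bm{\sigma}^{(\epsilon)}$, each $\Data_j$ with $j\neq i$ is conditionally Bernoulli with parameter $p_1\defeq(\quality e^{\epsilon}+1-\quality)/(e^{\epsilon}+1)$ given $\State=1$ and $p_0=1-p_1$ given $\State=0$. So I take $\widehat{\State}_{-i}$ to be the majority (equivalently, maximum-likelihood) estimator for these two hypotheses, and I set
\begin{equation*}
\paymec_i(\databf) = c_N\left[\frac{\mathbf{1}\{\data_i=1,\widehat{\State}_{-i}(\databf_{-i})=1\}}{P_\State(1)} + \frac{\mathbf{1}\{\data_i=0,\widehat{\State}_{-i}(\databf_{-i})=0\}}{P_\State(0)}\right],
\end{equation*}
with a normalizing constant $c_N>0$ to be chosen so that the first-order optimality condition holds exactly at $\xi=\epsilon$.

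\textbf{Error rate of the estimator.} A Chernoff (Bhattacharyya) bound for binary hypothesis testing between Bernoulli$(p_1)$ and Bernoulli$(p_0)$ gives
\begin{equation*}
\Pr(\widehat{\State}_{-i}\neq \State\mid\State) \le (2\sqrt{p_1 p_0})^{N-1} = e^{-(N-1)d},
\end{equation*}
where the last equality uses $p_0+p_1=1$ together with the definition \eqref{eq:d} of $d$. Write $\beta_N\defeq\Pr(\widehat{\State}_{-i}\neq \State\mid\State)=O(e^{-Nd})$ and $\alpha_N\defeq 1-\beta_N$.

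\textbf{Verifying the Nash equilibrium.} Fix $\bm{\sigma}_{-i}=\bm{\sigma}^{(\epsilon)}_{-i}$. Since $\Data_i\perp\Databf_{-i}\mid\State$ under any strategy $\sigma_i$, \player{}~$i$'s expected payment depends on $\Databf_{-i}$ only through $\widetilde{\paymecgen}_i(\data_i,\state)\defeq\expect[\paymec_i(\Databf)\mid\Data_i=\data_i,\State=\state]$, and $\widetilde{\paymecgen}_i$ has exactly the algebraic form of $\widehat{\paymecgen}_i^{(\epsilon)}$ in \eqref{eq:hypo-mec} with the $0/1$ entries of $\widehat{A}$ replaced by $\alpha_N$ and $\beta_N$. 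So \player{}~$i$'s optimization collapses to the genie-aided problem of Theorem~\ref{thm:lower} with payment $\widetilde{\paymecgen}_i$. Invoking Lemma~\ref{lem:equi-char}---whose proof in fact establishes that symmetric randomized-response or non-informative strategies dominate \emph{every} alternative strategy in any nonnegative mechanism, not just at equilibrium---restricts the search to $\sigma_i=\sigma_i^{(\xi)}$ with $\xi\ge 0$, so that the expected utility takes the form
\begin{equation*}
f(\xi) = (M_1-M_0)\,\frac{e^{\xi}}{e^{\xi}+1} + M_0 - \cost(\xi),
\end{equation*}
with $M_1-M_0$ linear in $c_N$. Setting $f'(\epsilon)=0$ is then a single linear equation in $c_N$, whose unique solution converges to the genie-aided normalizer $\cost'(\epsilon)(e^\epsilon+1)^2/(2e^\epsilon)$ at rate $e^{-Nd}$. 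Concavity of $\xi\mapsto e^{\xi}/(e^{\xi}+1)$ on $[0,\infty)$ together with convexity of $\cost$ makes $f$ concave on $[0,\infty)$, promoting this critical point to the global maximizer.

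\textbf{Payment bound and main obstacle.} Substituting $\bm{\sigma}^{(\epsilon)}$ into $\expect[\paymec_i(\Databf)]$ and expanding in $\beta_N$ yields $\expect[\paymec_i]=\Value_{\mathrm{LB}}(\epsilon)+O(e^{-Nd})$: the $\beta_N=0$ term reproduces exactly the genie-aided payment that saturates conditions \eqref{eq:hypo-mec10}--\eqref{eq:necessary-lower-gene}, while the $O(\beta_N)$ correction absorbs both the estimator noise and the small adjustment of $c_N$. The hard part of all of this is to pin down an \emph{exact} (not merely approximate) Nash equilibrium at every finite $N$: the scalar $c_N$ has to absorb the estimator noise with just the right magnitude so that $\xi=\epsilon$ exactly solves $f'(\xi)=0$, and global optimality must be checked against asymmetric randomized responses and strategies that use $\nonp$. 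I would handle the latter by adapting the proof of Lemma~\ref{lem:equi-char} directly to $\paymecbf$, exploiting that $\paymec_i$ is a nonnegative function of $(\Data_i,\widehat{\State}_{-i})$ alone with $\data_i$ entering only through the binary labels, so the reduction to symmetric randomized responses carries over verbatim.
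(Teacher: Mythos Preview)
Your design and proof outline are essentially the paper's: replace the unobservable $\State$ by the majority $M_{-i}$ of the other reports, pay only on the ``agreement'' events $\{X_i=M_{-i}\}$, calibrate a multiplicative constant so that $f'(\epsilon)=0$ exactly, use concavity of $f$ on $[0,\infty)$ for optimality among symmetric randomized responses, and bound the excess payment via a Chernoff/Bhattacharyya estimate on $\Pr(M_{-i}\neq\State)$.

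There is, however, one genuine gap. You write that Lemma~\ref{lem:equi-char} ``restricts the search to $\sigma_i=\sigma_i^{(\xi)}$ with $\xi\ge 0$,'' but Lemma~\ref{lem:equi-char} only narrows the best response to a symmetric randomized response \emph{or} a non-informative strategy, and non-informative strategies such as ``always report $1$'' (i.e.\ $p_1=p_0=1$) are \emph{not} of the form $\sigma_i^{(\xi)}$. The paper's mechanism uses two free weights $A_{1,1}$ and $A_{0,0}$, not a single scalar $c_N$, and chooses them so that $\overline K_1=-\overline K_0$ exactly. This extra condition makes every non-informative participating strategy yield the \emph{same} utility $\overline K$, after which the single inequality $f(\epsilon)\ge\overline K$ (established from convexity of $g$) disposes of all of them at once. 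Your one-parameter design fixes only $\overline K_1-\overline K_0$; in general $\overline K_1+\overline K_0\neq 0$, so the best non-informative deviation has utility $\overline K+\max\{\overline K_1+\overline K_0,\,0\}$, and you never verify that $f(\epsilon)$ dominates this. The repair is easy---either adopt the paper's two-parameter calibration, or observe that under your design $\overline K_1+\overline K_0=O(e^{-Nd})$ while $f(\epsilon)-\overline K$ is bounded away from $0$, so the comparison goes through for all $N$ large enough---but as written the Nash-equilibrium verification is incomplete. A minor related point: $\Pr(\widehat\State_{-i}\neq\State\mid\State=w)$ actually depends on $w$ when $N-1$ is even (tie-breaking is asymmetric); the paper tracks this via its $\gamma^{(n)}$ term, though it does not affect the $O(e^{-Nd})$ conclusion.
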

Comparing this upper bound with the lower bound $\Value_{\mathrm{LB}}(\epsilon)$ in Theorem~\ref{thm:lower} we can see that the gap between the lower and upper bounds is just the term $O(e^{-Nd})$, which diminishes to zero exponentially fast as $N$ goes to infinity.

We present the payment mechanism $\paymecbf$ in Section~\ref{sec:paymec}. We will show that under $\paymecbf$, the strategy profile $\bm{\sigma}^{(\epsilon)}$ consisting of $\epsilon$-strategies is a Nash equilibrium. Therefore, $\paymecbf$ is a member of $\mathcal{\paymecgenbf}(i;\epsilon)$, and the payment to \player{}~$i$ at $\bm{\sigma}^{(\epsilon)}$ gives an upper bound on the value of privacy.

The design of $\paymecbf$ is enlightened by the hypothetical payment mechanism $\widehat{\paymecgenbf}^{(\epsilon)}$ defined in \eqref{eq:hypo-mec}. But without direct access to the state $\State$, the mechanism $\paymecbf$ relies on the reported data from an \player{}~$i$'s peers, i.e., \players{} other than \player{}~$i$, to obtain an estimate of $\State$. We borrow the idea of the peer-prediction method \cite{MilResZec_09}, which rewards more for the agreement between an \player{} and her peers to encourage truthful reporting. However, unlike the peer-prediction method, the \players{} here have privacy concerns and they will weigh the privacy cost against the payment to choose the best privacy level. We modify the payments in $\widehat{\paymecgenbf}^{(\epsilon)}$ to ensure that the $\epsilon$-strategy is still a best response of each \player{} in $\paymecbf$, given that other \players{} also follow the $\epsilon$-strategy, which yields the desired Nash equilibrium $\bm{\sigma}^{(\epsilon)}$.

The equilibrium payment to each \player{} in $\paymecbf$ converges to the lower bound in Theorem~\ref{thm:lower} as the number of \players{} $N$ goes to infinity. The intuition behind is that as the number of \players{} $N$ goes to infinity, the majority of the reported data from other \players{} converges to the underlying state $\State$, and thus $\paymecbf$ works similar as the genie-aided mechanism $\widehat{\paymecgenbf}^{(\epsilon)}$, whose equilibrium payment to each \player{} equals to the lower bound in Theorem~\ref{thm:lower}.

\subsubsection{A Payment Mechanism \texorpdfstring{$\paymecbf$}{R(N,epsilon)}}\label{sec:paymec}
The payment mechanism $\paymecbf$ is designed for purchasing private data from $N$ privacy-aware \players{}, parameterized by a privacy parameter $\epsilon$, where $N\ge 2$ and $\epsilon>0$.
\begin{enumerate}[leftmargin=2em]
\item Each \player{} reports her data (which can be the decision of not participating).
\item The \principal{} counts the number of participants, which is denoted by $n$.
\item For non-participating \players{}, the payment is zero.
\item If there is only one participant, pay zero to this participant. Otherwise, for each participating \player{}~$i$, the \principal{} computes the variable
\begin{equation*}
M_{-i}=
\begin{cases}
1 & \text{if $\displaystyle\sum_{j\colon\Data_j\neq \perp,j\neq i}\Data_j\ge\Bigl\lfloor\frac{n-1}{2}\Bigr\rfloor + 1$,}\\
0 & \text{otherwise,}
\end{cases}
\end{equation*}
which is the majority of the other participants' reported data. Then the \principal{} pays \player{}~$i$ the following amount of payment according to her reported data $\Data_i$ and $M_{-i}$:
\begin{equation*}
\paymec_i(\Databf)=\frac{\cost'(\epsilon)(e^{\epsilon}+1)^2}{2e^{\epsilon}}A_{\Data_i,M_{-i}},
\end{equation*}
where the parameters $A_{1,1},A_{0,0},A_{0,1},A_{1,0}$ are defined in Section~\ref{sec:para}.
\end{enumerate}

\subsubsection{Payment Parameterization}\label{sec:para}
Let
\begin{equation*}
\alpha=\quality\frac{e^{\epsilon}}{e^{\epsilon}+1}+(1-\quality)\frac{1}{e^{\epsilon}+1}.
\end{equation*}
The physical meaning of $\alpha$ can be seen by considering the strategy profile $\bm{\sigma}^{(\epsilon)}$, where given the state $\State$, the reported data $\Data_1,\Data_2,\dots,\Data_N$ are i.i.d.\ with
\begin{equation*}
\Pr_{\sigma_i^{(\epsilon)}}(\Data_i=1\mid \State=1)=\Pr_{\sigma_i^{(\epsilon)}}(\Data_i=0\mid \State=0)=\alpha.
\end{equation*}
Given that the number of participants is $n$ with $n\ge 2$, define the following quantities. Consider a random variable that follows the binomial distribution with parameters $n-1$ and $\alpha$. Let $\beta^{(n)}$ denote the probability that this random variable is greater than or equal to $\lfloor\frac{n-1}{2}\rfloor + 1$. Let
\begin{equation}\label{eq:gamma}
\gamma^{(n)}=
\begin{cases}
\displaystyle
1- \binom{n-1}{\frac{n-1}{2}}\alpha^{\frac{n-1}{2}}(1-\alpha)^{\frac{n-1}{2}}& \text{if $n-1$ is even,}\\
1 & \text{if $n-1$ is odd.}
\end{cases}
\end{equation}
To see the physical meaning of $\beta^{(n)}$ and $\gamma^{(n)}$, still consider $\bm{\sigma}^{(\epsilon)}$, where the number of participants is $n=N$. Then for an \player{}~$i$,
\begin{align*}
\Pr_{\bm{\sigma}^{(\epsilon)}}(M_{-i}=1\mid \State=1)&=\beta^{(N)},\\
\Pr_{\bm{\sigma}^{(\epsilon)}}(M_{-i}=1\mid \State=0)&=\gamma^{(N)}-\beta^{(N)}.
\end{align*}

With the introduced notation, the parameters $A_{1,1}$, $A_{0,0}$, $A_{0,1}$, $A_{1,0}$ used in the payment mechanism $\paymecbf$ are defined as follows:
\begin{align*}
A_{1,1}&=\frac{P_\State(1)(1-\beta^{(n)})+P_\State(0)(1-(\gamma^{(n)}-\beta^{(n)}))}{(2\beta^{(n)}-\gamma^{(n)})(2\theta-1)P_\State(1)P_\State(0)},\\
A_{0,0}&=\frac{P_\State(1)\beta^{(n)}+P_\State(0)(\gamma^{(n)}-\beta^{(n)})}{(2\beta^{(n)}-\gamma^{(n)})(2\theta-1)P_\State(1)P_\State(0)},\\
A_{0,1}&=0,\\
A_{1,0}&=0.
\end{align*}
It is easy to verify that these parameters are nonnegative. Thus $\paymecbf$ is a nonnegative payment mechanism. The proof of the equilibrium properties of $\paymecbf$ in Theorem~\ref{thm:upper} is given below.

\subsubsection{Proof of Theorem~\ref{thm:upper}}
\begin{proof}
\begin{sloppypar}
It suffices to prove that the strategy profile $\bm{\sigma}^{(\epsilon)}$ is a Nash equilibrium in $\paymecbf$ and the expected payment to each \player{}~$i$ at this equilibrium satisfies that $\expect_{\bm{\sigma}^{(\epsilon)}}\Bigl[\paymec_i(\Databf)\Bigr]\le \Value_{\mathrm{LB}}(\epsilon)+O(e^{-Nd})$, where recall that $\Value_{\mathrm{LB}}(\epsilon)$ is defined in \eqref{eq:lowerbound}. For conciseness, in the remainder of this proof, we suppress the explicit dependence on $N$ and $\epsilon$, and write $\paymecgenbf$ and $\bm{\sigma}$ to represent $\paymecbf$ and $\bm{\sigma}^{(\epsilon)}$, respectively.
\end{sloppypar}

We first prove that the strategy profile $\bm{\sigma}$ is a Nash equilibrium in $\paymecgenbf$; i.e., for any \player{}~$i$, the $\epsilon$-strategy is a best response of \player{}~$i$ when other \players{} follow $\bm{\sigma}_{-i}$. Following the notation in the proof of Lemma~\ref{lem:equi-char}, for any \player{}~$i$ we consider any strategy $\sigma_i'$ of \player{}~$i$ and let
\begin{equation*}
\begin{split}
p_1=\Pr_{\sigma_i'}(\Data_i=1\mid \Signal_i=1),\quad& q_1=\Pr_{\sigma_i'}(\Data_i=0\mid \Signal_i=1),\\
p_0=\Pr_{\sigma_i'}(\Data_i=1\mid \Signal_i=0),\quad& q_0=\Pr_{\sigma_i'}(\Data_i=0\mid \Signal_i=0).
\end{split}
\end{equation*}
Then by the proof of Lemma~\ref{lem:equi-char}, the best response satisfies either $p_1=p_0,q_1=q_0$, or $p_1=q_0,p_0=q_1,p_1+q_1=1$, depending on the form of the utility function $U_i(p_1,p_0,q_1,q_0)$, which is the expected utility of \player{}~$i$ at the strategy $\sigma_i'$ when other \players{} follow $\bm{\sigma}_{-i}$. Thus, we derive the form of $U_i(p_1,p_0,q_1,q_0)$ next. Recall that we let $\overline{\paymecgen}_i(\data_i;\state)$ denote $\expect_{(\sigma_i',\bm{\sigma}_{-i})}[\paymecgen_i(\Databf)\mid \Data_i=\data_i, \State=\state]$ for $\data_i,\state\in\{0,1\}$. Then
\begin{equation*}
\begin{split}
&\mspace{23mu}U_i(p_1,p_0,q_1,q_0)\\
&=\expect_{(\sigma_i',\bm{\sigma}_{-i})}[R_i(\Databf)-\cost(\zeta(\sigma_i'))]\\
&=K_1p_1+K_0p_0+L_1q_1+L_0q_0-\cost(\zeta(p_1,p_0,q_1,q_0)),
\end{split}
\end{equation*}
with
\begin{align*}
K_1&=\{\overline{\paymecgen}_i(1;1)P_{\State}(1)\quality+\overline{\paymecgen}_i(1;0)P_{\State}(0)(1-\quality)\},\\
K_0&=\{\overline{\paymecgen}_i(1;1)P_{\State}(1)(1-\quality)+\overline{\paymecgen}_i(1;0)P_{\State}(0)\quality\},\\
L_1&=\{\overline{\paymecgen}_i(0;1)P_{\State}(1)\quality+\overline{\paymecgen}_i(0;0)P_{\State}(0)(1-\quality)\},\\
L_0&=\{\overline{\paymecgen}_i(0;1)P_{\State}(1)(1-\quality)+\overline{\paymecgen}_i(0;0)P_{\State}(0)\quality\}.
\end{align*}

In the designed mechanism $\paymecgenbf$, the payment to \player{}~$i$ only depends on $\Data_i$ and $M_{-i}$. Thus we write $\paymecgen_i(\Data_i;M_{-i})=\paymecgen_i(\Databf)$. Then the value of $\overline{\paymecgen}_i(\data_i;\state)$ is calculated as follows:
\begin{align*}
\overline{\paymecgen}_i(1;1)&=\expect_{(\sigma_i',\bm{\sigma}_{-i})}[\paymecgen_i(\Databf)\mid \Data_i=1,\State=1]\\
&=\beta^{(N)}\paymecgen_i(1;1)+(1-\beta^{(N)})\paymecgen_i(1;0),\\
\overline{\paymecgen}_i(1;0)&=\expect_{(\sigma_i',\bm{\sigma}_{-i})}[\paymecgen_i(\Databf)\mid \Data_i=1,\State=0]\\
&=(\gamma^{(N)}-\beta^{(N)})\paymecgen_i(1;1)+(1-(\gamma^{(N)}-\beta^{(N)}))\paymecgen_i(1;0),\\
\overline{\paymecgen}_i(0;1)&=\expect_{(\sigma_i',\bm{\sigma}_{-i})}[\paymecgen_i(\Databf)\mid \Data_i=0,\State=1]\\
&=(1-\beta^{(N)})\paymecgen_i(0;0)+\beta^{(N)}\paymecgen_i(0;1),\\
\overline{\paymecgen}_i(0;0)&=\expect_{(\sigma_i',\bm{\sigma}_{-i})}[\paymecgen_i(\Databf)\mid \Data_i=0,\State=0]\\
&=(1-(\gamma^{(N)}-\beta^{(N)}))\paymecgen_i(0;0)+(\gamma^{(N)}-\beta^{(N)})\paymecgen_i(0;1),
\end{align*}
and it can be verified that $K_1$, $K_0$, $L_1$ and $L_0$ are all positive. Therefore, by the proof of Lemma~\ref{lem:equi-char}, the possibility for the best response to be $p_1=p_0,q_1=q_0,0<p_1+q_1<1$ can be eliminated and the best response strategy must be in one of the following three forms:
\begin{gather}
p_1=p_0=q_1=q_0=0,\label{eq:not-participate}\\
p_1=p_0,\quad q_1=q_0,\quad p_1+q_1=1,\label{eq:non-informative}\\
p_1=q_0,\quad p_0=q_1,\quad p_1+q_1=1.\label{eq:symmetric}
\end{gather}
The strategy in \eqref{eq:not-participate} is to always not participate, which yields an utility of zero. For strategies in the form of \eqref{eq:non-informative} or \eqref{eq:symmetric}, we can write the expected utility as a function of $p_1$ and $p_0$ as follows:
\begin{equation*}
\overline{U}_i(p_1,p_0)=\overline{K}_1p_1+\overline{K}_0p_0+\overline{K}-\cost(\zeta(p_1,p_0)),
\end{equation*}
where $\overline{K}_1=K_1-L_1$, $\overline{K}_0=K_0-L_0$, $\overline{K}=L_1+L_0$, and with a little abuse of notation, $\zeta(p_1,p_0)=\max\biggl\{\biggl|\ln\frac{p_1}{p_0}\biggr|,\biggl|\ln\frac{1-p_1}{1-p_0}\biggr|\biggr\}$. Inserting the value of $\paymecgen_i(\Data_i;M_{-i})$ gives
\begin{equation*}
\overline{K}_1=\frac{\cost'(\epsilon)(e^{\epsilon}+1)^2}{2e^{\epsilon}},\quad\overline{K}_0=-\frac{\cost'(\epsilon)(e^{\epsilon}+1)^2}{2e^{\epsilon}}.
\end{equation*}
Then a strategy in the form of \eqref{eq:non-informative} yields an utility of $\overline{K}>0$. A strategy in the form of \eqref{eq:symmetric} can be written as
\begin{equation*}
p_1=q_0=\frac{e^{\xi}}{e^{\xi}+1},\quad p_0=q_1=\frac{1}{e^{\xi}+1}.
\end{equation*}
Then the expected utility can be further written as a function $f$ of $\xi$ as follows:
\begin{equation*}
f(\xi)=\overline{K}_1\frac{e^{\xi}}{e^{\xi}+1}+\overline{K}_0\frac{1}{e^{\xi}+1}-\cost(|\xi|)+\overline{K}.
\end{equation*}
Therefore, to prove that the $\epsilon$-strategy is a best response of \player{}~$i$, it suffices to prove that $\epsilon$ maximizes $f(\xi)$ and $f(\epsilon)\ge\overline{K}$. For any $\xi<0$, it is easy to see that
\begin{equation*}
\overline{K}_1\frac{e^{\xi}}{e^{\xi}+1}+\overline{K}_0\frac{1}{e^{\xi}+1}<0<\overline{K}_1\frac{e^{-\xi}}{e^{-\xi}+1}+\overline{K}_0\frac{1}{e^{-\xi}+1}.
\end{equation*}
Thus $f(\xi)$ achieves its maximum value at some $\xi\ge 0$. For any $\xi\ge 0$,
\begin{align*}
f'(\xi)&=(\overline{K}_1-\overline{K}_0)\frac{e^{\xi}}{(e^{\xi}+1)^2}-\cost'(\xi),\\
f''(\xi)&=-(\overline{K}_1-\overline{K}_0)\frac{e^{\xi}(e^{\xi}-1)}{(e^{\xi}+1)^3}-\cost''(\xi)\le 0,
\end{align*}
where the second inequality is due to the convexity of the cost function $\cost$. Therefore, $f$ is concave. Since $f'(\epsilon)=0$, $\epsilon$ maximizes $f(\xi)$. The optimal value is
\begin{align*}
f(\epsilon)&=\cost'(\epsilon)\frac{e^{\epsilon}-e^{-\epsilon}}{2}-\cost(\epsilon)+\overline{K}.
\end{align*}
By the convexity of $\cost$, $\cost(\epsilon)\le\cost'(\epsilon)\epsilon\le\cost'(\epsilon)\frac{e^{\epsilon}-e^{-\epsilon}}{2}$. Thus $f(\epsilon)\ge\overline{K}$, which completes the proof for the $\epsilon$-strategy to be a best response of \player{}~$i$.

Next we calculate the expected payment to \player{}~$i$ at $\bm{\sigma}$, which can be written as
\begin{equation*}
\expect_{\bm{\sigma}}[\paymecgen_i(\Databf)]=-(\overline{K}_1-\overline{K}_0)\frac{1}{e^{\epsilon}+1}+\overline{K}_1+\overline{K}.
\end{equation*}
By definitions,
\begin{align*}
&\mspace{23mu}\overline{K}_1+\overline{K}\\
&=\frac{g'(\epsilon)(e^{\epsilon}+1)^2}{2e^{\epsilon}}\frac{1}{(2\beta^{(N)}-\gamma^{(N)})(2\quality-1)}\\
&\mspace{23mu}\cdot\biggl(2\bigl(\beta^{(N)}\bigr)^2+(4\quality-2-2\gamma^{(N)})\beta^{(N)}\\
&\mspace{48mu}+2(1-\quality)\gamma^{(N)}+\beta^{(N)}(1-\beta^{(N)})\frac{P_{\State}(1)}{P_{\State}(0)}\\
&\mspace{48mu}+(\gamma^{(N)}-\beta^{(N)})(1-(\gamma^{(N)}-\beta^{(N)}))\frac{P_{\State}(0)}{P_{\State}(1)}\biggr)\\
&=: \frac{g'(\epsilon)(e^{\epsilon}+1)^2}{2e^{\epsilon}}h(\beta^{(N)}).
\end{align*}
Then
\begin{align*}
\expect_{\bm{\sigma}}[\paymecgen_i(\Databf)]&=\frac{\cost'(\epsilon)(e^{\epsilon}+1)}{e^{\epsilon}}\biggl(\frac{1}{2}h(\beta^{(N)})(e^{\epsilon}+1)-1\biggr)\\
&=\Value_{\mathrm{LB}}(\epsilon)+\frac{\cost'(\epsilon)(e^{\epsilon}+1)^2}{2e^{\epsilon}}\biggl(h(\beta^{(N)})-\frac{2\quality}{2\quality-1}\biggr).
\end{align*}

To derive an upper bound on the expected payment, we first analyze the function $h$. Rearranging terms gives
\begin{align*}
h(\beta^{(N)})&=\frac{1}{2\quality-1}\frac{1}{2\beta^{(N)}-\gamma^{(N)}}\\
&\mspace{23mu}\cdot\Biggl((2-t)\bigl(\beta^{(N)}\bigr)^2+\biggl(4\quality-2-2\gamma^{(N)}+\frac{P_{\State}(1)}{P_{\State}(0)}\\
&\mspace{54mu}+(2\gamma^{(N)}-1)\frac{P_{\State}(0)}{P_{\State}(1)}\biggr)\beta^{(N)}\\
&\mspace{54mu}+2(1-\quality)\gamma^{(N)}+\gamma^{(N)}(1-\gamma^{(N)})\frac{P_{\State}(0)}{P_{\State}(1)}\Biggr),
\end{align*}
where $t=\frac{(P_{\State}(1))^2+(P_{\State}(0))^2}{P_{\State}(1)P_{\State}(0)}\ge 2$. Taking derivative yields
\begin{align*}
h'(\beta^{(N)})&=\frac{1}{2\quality-1}\frac{1}{(2\beta^{(N)}-\gamma^{(N)})^2}\\
&\mspace{23mu}\cdot\biggl(2(2-t)\Bigl(\beta^{(N)}-\frac{\gamma^{(N)}}{2}\Bigr)^2-\bigl(\gamma^{(N)}\bigr)^2\\
&\mspace{50mu}-\frac{\gamma^{(N)}t}{2}(2-\gamma^{(N)})-2\gamma^{(N)}(1-\gamma^{(N)})\biggr).
\end{align*}
Therefore, $h'(\beta^{(N)})\le 0$ and $h$ is a non-increasing function.

Next we derive a lower bound on $\beta^{(N)}$. Let $Y_1,Y_2,\dots,Y_{N-1}$ be i.i.d.\ Bernoulli random variables with parameter $\alpha$. Then by the definition of $\beta^{(N)}$:
\begin{align*}
\beta^{(N)}&=\Pr\Biggl(\sum_{l=1}^{N-1}Y_l\ge\biggl\lfloor\frac{N-1}{2}\biggr\rfloor+1\Biggr)\\
&=\gamma^{(N)}-\Pr\Biggl(\sum_{l=1}^{N-1}(1-Y_l)\ge N-1-\biggl\lceil\frac{N-1}{2}\biggr\rceil+1\Biggr)\\
&\ge\gamma^{(N)}-\Pr\Biggl(\sum_{l=1}^{N-1}(1-Y_l)\ge\frac{N-1}{2}\Biggr)\\
&\ge \gamma^{(N)}-e^{-(N-1)d},
\end{align*}
where $d=\frac{1}{2}\ln\frac{1}{4\alpha(1-\alpha)}>0$ is the parameter defined in \eqref{eq:d} and the last inequality follow from the Chernoff bound \cite{SriYin_14}.

By the monotonicity of $h$,
\begin{align*}
&\mspace{25mu}h(\beta^{(N)})-\frac{2\quality}{2\quality-1}\\
&\le h\Bigl(\gamma^{(N)}-e^{-(N-1)d}\Bigr)-\frac{2\quality}{2\quality-1}\\
&=\frac{1}{2\quality-1}\frac{1}{\gamma^{(N)}-2e^{-(N-1)d}}\\
&\mspace{23mu}\cdot\Biggl((2-t)e^{-2(N-1)d}+\biggl(2(1-\gamma^{(N)})+2\gamma^{(N)}t\\
&\mspace{54mu}-\frac{P_{\State}(1)}{P_{\State}(0)}-(2\gamma^{(N)}-1)\frac{P_{\State}(0)}{P_{\State}(1)}\biggr)e^{-(N-1)d}\\
&\mspace{54mu}+\gamma^{(N)}\frac{P_{\State}(1)}{P_{\State}(0)}+\bigl(\gamma^{(N)}\bigr)^2\frac{P_{\State}(0)}{P_{\State}(1)}-\bigl(\gamma^{(N)}\bigr)^2t\Biggr)\\
&\le\frac{1}{2\quality-1}\frac{1}{\gamma^{(N)}-2e^{-(N-1)d}}\\
&\mspace{23mu}\cdot\biggl((2-t)e^{-2(N-1)d}+(2(1-\gamma^{(N)})+t)e^{-(N-1)d}\\
&\mspace{54mu}+\gamma^{(N)}(1-\gamma^{(N)})\frac{P_{\State}(1)}{P_{\State}(0)}\biggr).
\end{align*}

Notice that
\begin{equation*}
1-\gamma^{(N)}=
\begin{cases}
\displaystyle
\binom{N-1}{\frac{N-1}{2}}\alpha^{\frac{N-1}{2}}(1-\alpha)^{\frac{N-1}{2}}& \text{if $N-1$ is even,}\\
0 & \text{if $N-1$ is odd.}
\end{cases}
\end{equation*}
Then when $N-1$ is odd, $\gamma^{(N)}=1$, and when $N-1$ is even,
\begin{align*}
1-\gamma^{(N)}&=\binom{N-1}{\frac{N-1}{2}}\alpha^{\frac{N-1}{2}}(1-\alpha)^{\frac{N-1}{2}}\\
&=e^{-(N-1)d}\cdot\binom{N-1}{\frac{N-1}{2}}2^{-(N-1)},
\end{align*}
where $\lim_{N\rightarrow\infty}\binom{N-1}{\frac{N-1}{2}}2^{-(N-1)}=0$. Thus $1-\gamma^{(N)}=O(e^{-Nd})$ as $N\rightarrow\infty$.

Therefore,
\begin{align*}
&\mspace{23mu}\expect_{\bm{\sigma}}[\paymecgen_i(\Databf)]\\
&\le\Value_{\mathrm{LB}}(\epsilon)+\frac{\cost'(\epsilon)(e^{\epsilon}+1)^2}{2e^{\epsilon}}\biggl(h\Bigl(\gamma^{(N)}-e^{-(N-1)d}\Bigr)-\frac{2\quality}{2\quality-1}\biggr)\\
&\le\Value_{\mathrm{LB}}(\epsilon)+\frac{\cost'(\epsilon)(e^{\epsilon}+1)^2}{2e^{\epsilon}}\frac{1}{2\quality-1}\frac{1}{\gamma^{(N)}-2e^{-(N-1)d}}\\
&\mspace{23mu}\cdot\Bigl((2-t)e^{-2(N-1)d}+(2(1-\gamma^{(N)})+t)e^{-(N-1)d}+O(e^{-Nd})\Bigr)\\
&=\Value_{\mathrm{LB}}(\epsilon)+O(e^{-Nd}),
\end{align*}
as $N\rightarrow\infty$, which completes the proof.
\end{proof}

\subsection{Extension to Heterogeneous Cost Functions}\label{sec:hete}
Our results on the value of privacy are also valid in the scenario where \players{}' privacy cost functions are heterogeneous and known. In this case, the value of $\epsilon$ units of privacy is still measured by the minimum payment of all nonnegative payment mechanisms under which an \player{}'s best response in a Nash equilibrium is to report the data with a privacy level of $\epsilon$. However, with heterogeneous cost functions, this value differs from \player{} to \player{}. Following similar notation, we let $\Value_i(\epsilon)$ denote the value of $\epsilon$ units of privacy to \player{}~$i$, and let $\cost_i$ denote the cost function of \player{}~$i$. Then the following lower and upper bounds, which are almost identical to those in Theorem~\ref{thm:lower} and \ref{thm:upper} except the heterogeneous cost function $\cost_i(\epsilon)$, hold
\begin{equation*}
\cost_i'(\epsilon)\frac{e^{\epsilon}+1}{e^{\epsilon}}\biggl(\frac{\quality}{2\quality-1}(e^{\epsilon}+1)-1\biggr) \le\Value_i(\epsilon)
\le \cost_i'(\epsilon)\frac{e^{\epsilon}+1}{e^{\epsilon}}\biggl(\frac{\quality}{2\quality-1}(e^{\epsilon}+1)-1\biggr)+O(e^{-Nd}).
\end{equation*}
The lower bound above can be derived directly from the proof of Theorem~\ref{thm:lower}, since the proof does not depend on whether the cost functions are homogeneous or not. The upper bound above is given by a payment mechanism that works similar to $\paymecbf$, with the $\cost'$ in $\paymec_i$ replaced by $\cost'_i$. In this mechanism, the strategy profile $\bm{\sigma}^{(\epsilon)}$ is still a Nash equilibrium, and the expected payment to \player{}~$i$ at this equilibrium can still be upper bounded as in Theorem~\ref{thm:upper} but again with $\cost'$ replaced by $\cost'_i$.

\section{Payment vs.\ Accuracy}
\begin{sloppypar}
In this section, we apply the fundamental bounds on the value of privacy to the payment--accuracy problem, where the \principal{} aims to minimize the total payment while achieving an accuracy target in learning the state. The solution of this problem can be used to guide the design of review systems. For example, to evaluate the underlying value of a new product, a review system can utilize the results in this section to design a payment mechanism for eliciting informative feedback from testers.
\end{sloppypar}

\subsection{Payment--Accuracy Problem}
The \principal{} learns the state $\State$ from the reported data $\Data_1,\Data_2,\dots,\Data_N$, which is collected through some payment mechanism, by performing hypothesis testing between the following two hypotheses:
\begin{align*}
H_0\colon & \State=0,\\
H_1\colon & \State=1.
\end{align*}
The conditional distributions of the reported data given the hypotheses are specified by the strategy profile in a Nash equilibrium of the payment mechanism. According to Lemma~\ref{lem:equi-char}, we can write an equilibrium strategy profile in the form of $(\sigma_i^{(\epsilon_i)})=(\sigma_1^{(\epsilon_1)},\sigma_2^{(\epsilon_2)},\dots,\sigma_N^{(\epsilon_N)})$ with $\epsilon_i\in\mathbb{R}\setminus\{0\}\cup\{\independent\}$, where recall that $\sigma_i^{(\epsilon_i)}$ is the $\epsilon_i$-strategy. For ease of notation, a non-informative strategy is also called an $\epsilon$-strategy but with $\epsilon=\independent$. Let $\mathcal{\paymecgenbf}(\epsilon_1,\epsilon_2,\dots,\epsilon_N)$ denote the set of nonnegative payment mechanisms in which $(\sigma_i^{(\epsilon_i)})$ is a Nash equilibrium.

We consider an information-theoretic approach based on the Chernoff information \cite{CovTho_06} to measure the accuracy that can be achieved in hypothesis testing. For each \player{}~$i$, let $D(\epsilon_i)$ denote the Chernoff information between the conditional distributions of $\Data_i$ given $\State=1$ and $\State=0$. The larger $D(\epsilon_i)$ is, the more possible that the two hypotheses can be distinguished. In later discussions we will see that the Chernoff information is closely related to the best achievable probability of error.

The \principal{} aims to minimize the total payment while achieving an accuracy target. The design choices include the number of \players{} $N$, the parameters $\epsilon_1,\epsilon_2,\dots,\epsilon_N$, and the payment mechanism $\paymecgenbf$ in which the strategy profile $(\sigma_i^{(\epsilon_i)})$ is a Nash equilibrium. Then we formulate the mechanism design problem as the following optimization problem, which we call the \emph{payment--accuracy problem}:
\begin{align*}
\underset{\substack{N\in\naturalnumber,\,\epsilon_i\in\mathbb{R}\setminus\{0\}\cup\{\independent\},\forall i\\\paymecgenbf\in\mathcal{R}(\epsilon_1,\epsilon_2,\dots,\epsilon_N)}
}{\text{min}}\mspace{12mu}&\sum_{i=1}^{N}\expect_{(\sigma_i^{(\epsilon_i)})}[\paymecgen_i(\Databf)]\\
\text{subject to}\mspace{30mu}& e^{-\sum_{i=1}^{N}D(\epsilon_i)}\le \err,
\end{align*}
where the accuracy target is represented by $\err$, which is related to the maximum allowable error. We focus on the range $\err\in(0,1)$ for nontriviality. Let $F(\err)$ denote the optimal payment in this problem, i.e., the infimum of the total payment while satisfying the accuracy target $\err$.

\subsection{Bounds on the Payment--Accuracy Problem}
We present bounds on $F(\err)$ in Theorem~\ref{thm:payment--accuracy} below. For convenience, we define
\begin{equation}\label{eq:tilda}
\widetilde{\epsilon}=\inf\biggl\{\arg\max\biggl\{\frac{D(\epsilon)}{\Value_{\mathrm{LB}}(\epsilon)}\colon \epsilon>0\biggr\}\biggr\},\mspace{9mu}
\widetilde{N}=\biggl\lceil\frac{\ln(1/\err)}{D(\widetilde{\epsilon})}\biggr\rceil,
\end{equation}
where recall that $\Value_{\mathrm{LB}}(\epsilon)$ is the lower bound in Theorem~\ref{thm:lower}.
\begin{sloppypar}
\begin{theorem}\label{thm:payment--accuracy}
The optimal payment $F(\err)$ in the payment--accuracy problem for a given accuracy target $\err\in(0,1)$ is bounded as: $(\widetilde{N}-1)\Value_{\mathrm{LB}}(\widetilde{\epsilon})\le F(\err)\le \widetilde{N}\Value_{\mathrm{LB}}(\widetilde{\epsilon})+O(\err\ln(1/\err))$, where the $O(\cdot)$ is for $\err\rightarrow 0$.
\end{theorem}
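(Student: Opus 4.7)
The plan is to prove the two bounds by invoking the per-\player{} bounds $\Value(\epsilon)\ge \Value_{\mathrm{LB}}(\epsilon)$ and $\Value(\epsilon)\le \Value_{\mathrm{LB}}(\epsilon)+O(e^{-Nd})$ from Theorems~\ref{thm:lower} and~\ref{thm:upper}, combined with a scalarization of the accuracy constraint through the ``information-per-payment'' ratio $D(\epsilon)/\Value_{\mathrm{LB}}(\epsilon)$ that $\widetilde{\epsilon}$ maximizes by definition.

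For the lower bound I would start with an arbitrary feasible tuple $(N,\epsilon_1,\ldots,\epsilon_N,\paymecgenbf)$. By Lemma~\ref{lem:equi-char} each equilibrium strategy is a symmetric randomized response (signed level $\epsilon_i$) or non-informative ($\epsilon_i=\nonp$); dropping non-informative coordinates is cost-free because they contribute $D(\nonp)=0$ to the accuracy sum while still costing a nonnegative payment, and by Lemma~\ref{lem:positive-eps} combined with the symmetry $D(-\epsilon_i)=D(\epsilon_i)$ I may assume each $\epsilon_i>0$. Theorem~\ref{thm:lower} then gives $\expect_{(\sigma_i^{(\epsilon_i)})}[\paymecgen_i(\Databf)]\ge \Value_{\mathrm{LB}}(\epsilon_i)$, and the argmax definition of $\widetilde{\epsilon}$ yields $\Value_{\mathrm{LB}}(\epsilon_i)\ge \Value_{\mathrm{LB}}(\widetilde{\epsilon})\,D(\epsilon_i)/D(\widetilde{\epsilon})$ for every $i$. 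Summing and invoking the feasibility inequality $\sum_i D(\epsilon_i)\ge \ln(1/\err)$ produces a total payment of at least $\Value_{\mathrm{LB}}(\widetilde{\epsilon})\ln(1/\err)/D(\widetilde{\epsilon})$, and $\widetilde{N}-1\le \ln(1/\err)/D(\widetilde{\epsilon})$ (which follows from $\widetilde{N}=\lceil \ln(1/\err)/D(\widetilde{\epsilon})\rceil$) delivers the claimed bound $(\widetilde{N}-1)\Value_{\mathrm{LB}}(\widetilde{\epsilon})$.

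For the upper bound I would exhibit a single feasible construction: take $N=\widetilde{N}$ and deploy the mechanism $\paymecbf$ from Section~\ref{sec:upper} with parameters $N=\widetilde{N}$ and $\epsilon=\widetilde{\epsilon}$. By Theorem~\ref{thm:upper} the profile $\bm{\sigma}^{(\widetilde{\epsilon})}$ is a Nash equilibrium of this mechanism, so it belongs to $\mathcal{\paymecgenbf}(\widetilde{\epsilon},\ldots,\widetilde{\epsilon})$; and by the choice $\widetilde{N}D(\widetilde{\epsilon})\ge \ln(1/\err)$ the accuracy constraint is automatically satisfied. Theorem~\ref{thm:upper} bounds the expected payment to each \player{} at this equilibrium by $\Value_{\mathrm{LB}}(\widetilde{\epsilon})+O(e^{-\widetilde{N}d})$, so the total payment is at most $\widetilde{N}\Value_{\mathrm{LB}}(\widetilde{\epsilon})+\widetilde{N}\cdot O(e^{-\widetilde{N}d})$.

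The main obstacle is showing that the residual $\widetilde{N}\cdot O(e^{-\widetilde{N}d})$ collapses into $O(\err\ln(1/\err))$, which hinges on the identity $d=D(\widetilde{\epsilon})$. I would verify this identity by a direct computation: with $\alpha=\quality e^{\widetilde{\epsilon}}/(e^{\widetilde{\epsilon}}+1)+(1-\quality)/(e^{\widetilde{\epsilon}}+1)$, under the $\widetilde{\epsilon}$-strategy the conditional law of $\Data_i$ is $\mathrm{Ber}(\alpha)$ given $\State=1$ and $\mathrm{Ber}(1-\alpha)$ given $\State=0$, whose symmetric Chernoff information is $D(\widetilde{\epsilon})=\tfrac{1}{2}\ln\tfrac{1}{4\alpha(1-\alpha)}$; unpacking \eqref{eq:d} shows the exponent $d$ equals exactly the same expression. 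Once this is in hand, $e^{-\widetilde{N}d}\le e^{-\ln(1/\err)}=\err$, and since $\widetilde{N}=O(\ln(1/\err))$ the residual term is $O(\err\ln(1/\err))$ as claimed. Minor bookkeeping includes observing that for small enough $\err$ one automatically has $\widetilde{N}\ge 2$, so the definition of $\paymecbf$ applies.
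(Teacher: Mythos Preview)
Your proposal is correct and follows essentially the same route as the paper: the lower bound is exactly Lemma~\ref{lem:lower-P2} (reduce to $\epsilon_i>0$ via Lemmas~\ref{lem:equi-char}--\ref{lem:positive-eps}, apply Theorem~\ref{thm:lower}, then use the extremality of $D(\widetilde{\epsilon})/\Value_{\mathrm{LB}}(\widetilde{\epsilon})$ together with $\widetilde{N}-1<\ln(1/\err)/D(\widetilde{\epsilon})$), and the upper bound is Lemma~\ref{lem:upper-mec}, whose proof the paper omits but which you have correctly filled in. In particular your key identity $d=D(\widetilde{\epsilon})$ is exactly what the paper implicitly uses (compare \eqref{eq:d} with the closed form for $D(\epsilon)$ in Appendix~\ref{app:proof-lower-P2}), and this is precisely what turns $\widetilde{N}\cdot O(e^{-\widetilde{N}d})$ into $O(\err\ln(1/\err))$.
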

\end{sloppypar}

The upper bound in Theorem~\ref{thm:payment--accuracy} is given by the designed mechanism $\paymecbf$ with parameters chosen as $\epsilon=\widetilde{\epsilon}$ and $N=\widetilde{N}$. Note that $\widetilde{\epsilon}$ can be proved to have a well-defined finite value independent of $\err$. By the lower and upper bounds on the value of privacy, the payment to each \player{} in $\paymecgenbf^{(\widetilde{N},\widetilde{\epsilon})}$ is roughly equal to the lower bound $\Value_{\mathrm{LB}}(\widetilde{\epsilon})$. Then Theorem~\ref{thm:payment--accuracy} indicates that the total payment of the designed mechanism $\paymecgenbf^{(\widetilde{N},\widetilde{\epsilon})}$ is at most one \player{}'s payment away from the minimum, with the diminishing term $O(\tau\ln(1/\tau))$ omitted. Figure~\ref{fig:paymentAccuracy} shows an illustration of the lower and upper bounds.
\begin{figure}
\centering
\includegraphics[scale=0.35]{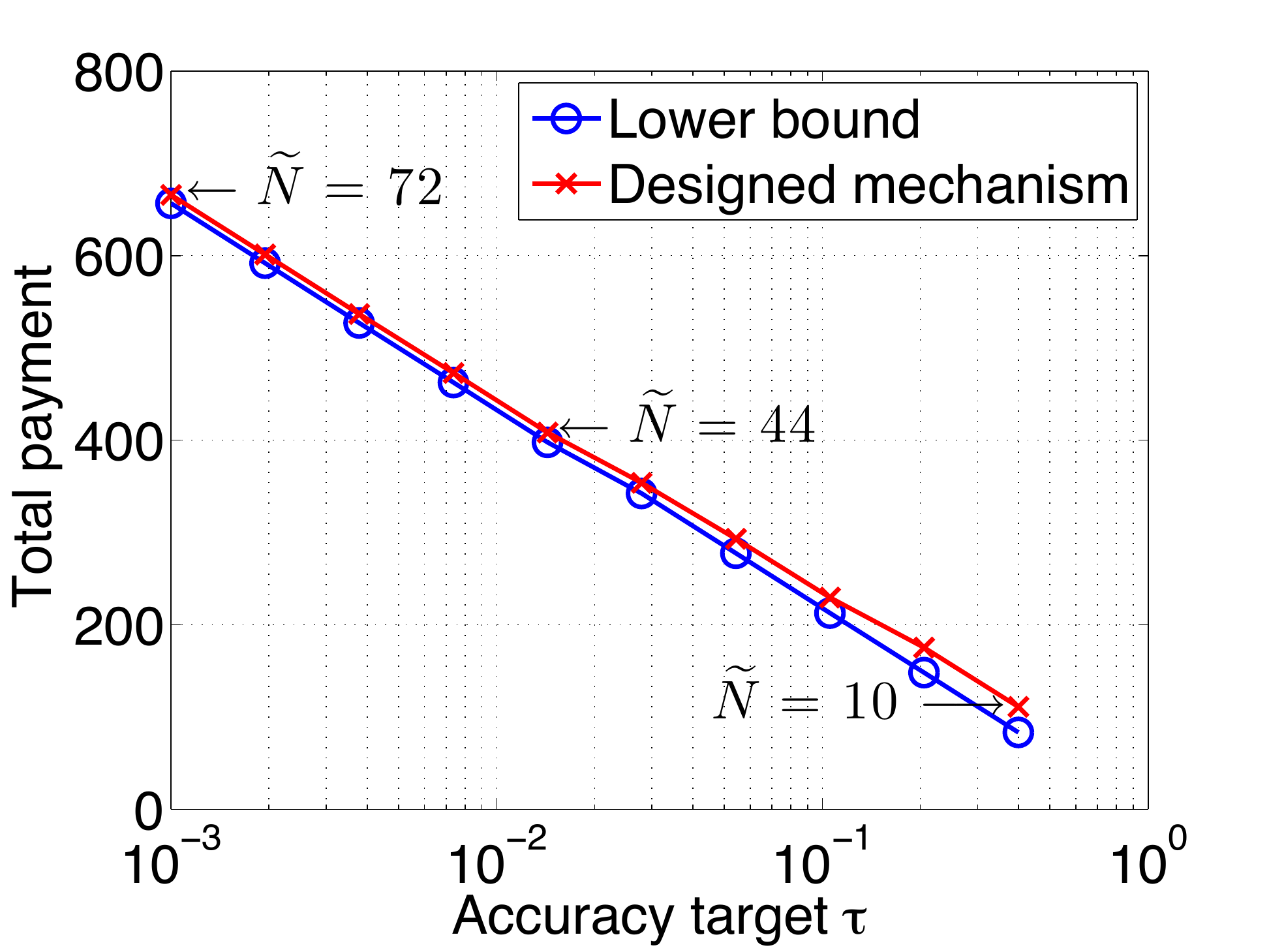}
\caption{Illustration of the lower and upper bounds in Theorem~\ref{thm:payment--accuracy} on the minimum total payment for achieving an accuracy target $\err$, where the upper bound is given by the designed mechanism $\paymecgenbf^{(\widetilde{N},\widetilde{\epsilon})}$. In this example, the prior \pmf\ of the state is $P_{\State}(1)=0.7$, $P_{\State}(0)=0.3$. The quality of signals is $\quality=0.8$. The cost function is $\cost(\epsilon)=\epsilon$. The range of $\err$ shown in the figure is $0.001$--$0.4$.}
\label{fig:paymentAccuracy}
\end{figure}

Theorem~\ref{thm:payment--accuracy} is proved by Lemma~\ref{lem:lower-P2} and Lemma~\ref{lem:upper-mec} below, where the lower bound is given by the lower bound on the value of privacy, and the upper bound is given by $\paymecgenbf^{(\widetilde{N},\widetilde{\epsilon})}$.

\subsubsection{Lower Bound}
First, notice that it suffices to limit the choice of each $\epsilon_i$ to $(0,+\infty)$ in the payment--accuracy problem, since when $\epsilon_i=\independent$, $D(\epsilon_i)=0$, and when $\epsilon_i<0$, $D(\epsilon_i)=D(|\epsilon_i|)$ and there exists another nonnegative payment mechanism with the same payment property and a Nash equilibrium at $(\sigma_i^{(|\epsilon_i|)})$ by Lemma~\ref{lem:positive-eps}.

Now we use the lower bound on the value of privacy to prove the lower bound on $F(\err)$. By Theorem~\ref{thm:lower},
\begin{equation*}
\inf_{\paymecgenbf\in\mathcal{R}{(\epsilon_1,\epsilon_2,\dots,\epsilon_N)}}\sum_{i=1}^{N}\expect_{(\sigma_i^{(\epsilon_i)})}[\paymecgen_i(\Databf)]\ge\sum_{i=1}^N\Value_{\mathrm{LB}}(\epsilon_i).
\end{equation*}
Therefore, the optimal payment $F(\err)$ is lower bounded by the optimal value of the following optimization problem \eqref{eq:payment-accuracy-lower}:
\begin{equation}\tag{P1}\label{eq:payment-accuracy-lower}
\begin{split}
\underset{N\in\mathbb{N},\,\epsilon_i\in(0,+\infty),\forall i}{\text{min}}\mspace{12mu}& \sum_{i=1}^N\Value_{\mathrm{LB}}(\epsilon_i)\\
\text{subject to}\mspace{30mu}& e^{-\sum_{i=1}^N D(\epsilon_i)}\le \err.\nonumber
\end{split}
\end{equation}
\begin{sloppypar}
\begin{lemma}\label{lem:lower-P2}
Any feasible solution $(N,\epsilon_1,\epsilon_2,\dots,\epsilon_N)$ of \eqref{eq:payment-accuracy-lower} satisfies
\begin{equation*}
\sum_{i=1}^{N}\Value_{\mathrm{LB}}(\epsilon_i)\ge(\widetilde{N}-1)\Value_{\mathrm{LB}}(\widetilde{\epsilon}),
\end{equation*}
where $\widetilde{\epsilon}$ and $\widetilde{N}$ are defined in \eqref{eq:tilda}.
\end{lemma}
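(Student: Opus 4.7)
The plan is to reduce the optimization (P1) to a simple scalar inequality via the key ratio $D(\epsilon)/\Value_{\mathrm{LB}}(\epsilon)$, which should be interpreted as ``accuracy purchased per unit payment.'' Since $\widetilde{\epsilon}$ is defined as (the infimum of) the maximizers of this ratio over $\epsilon>0$, by the defining property of $\widetilde{\epsilon}$ we have, for every $\epsilon>0$,
\begin{equation*}
\frac{D(\epsilon)}{\Value_{\mathrm{LB}}(\epsilon)}\le \frac{D(\widetilde{\epsilon})}{\Value_{\mathrm{LB}}(\widetilde{\epsilon})},
\quad\text{equivalently,}\quad
\Value_{\mathrm{LB}}(\epsilon)\ge \frac{\Value_{\mathrm{LB}}(\widetilde{\epsilon})}{D(\widetilde{\epsilon})}\,D(\epsilon).
\end{equation*}
(Implicit in the statement of the lemma is that $\widetilde{\epsilon}$ is well-defined and finite, so $\Value_{\mathrm{LB}}(\widetilde{\epsilon})$ and $D(\widetilde{\epsilon})$ are positive finite constants; this is presumably established elsewhere in the paper and I would quote it.)

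Next I would sum the above inequality over $i=1,\dots,N$ along any feasible solution $(N,\epsilon_1,\dots,\epsilon_N)$ of \eqref{eq:payment-accuracy-lower}, obtaining
\begin{equation*}
\sum_{i=1}^{N}\Value_{\mathrm{LB}}(\epsilon_i)\ge \frac{\Value_{\mathrm{LB}}(\widetilde{\epsilon})}{D(\widetilde{\epsilon})}\sum_{i=1}^{N}D(\epsilon_i).
\end{equation*}
The feasibility constraint $e^{-\sum_i D(\epsilon_i)}\le \err$ is exactly $\sum_{i=1}^N D(\epsilon_i)\ge \ln(1/\err)$, so
\begin{equation*}
\sum_{i=1}^{N}\Value_{\mathrm{LB}}(\epsilon_i)\ge \frac{\Value_{\mathrm{LB}}(\widetilde{\epsilon})}{D(\widetilde{\epsilon})}\,\ln(1/\err).
\end{equation*}

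Finally I would handle the integer ceiling in the definition of $\widetilde{N}$. From $\widetilde{N}=\lceil\ln(1/\err)/D(\widetilde{\epsilon})\rceil$ we get $\widetilde{N}-1<\ln(1/\err)/D(\widetilde{\epsilon})$, i.e., $(\widetilde{N}-1)D(\widetilde{\epsilon})<\ln(1/\err)$, so
\begin{equation*}
\sum_{i=1}^{N}\Value_{\mathrm{LB}}(\epsilon_i)\ge \frac{\Value_{\mathrm{LB}}(\widetilde{\epsilon})}{D(\widetilde{\epsilon})}\,\ln(1/\err)> (\widetilde{N}-1)\,\Value_{\mathrm{LB}}(\widetilde{\epsilon}),
\end{equation*}
which is the claimed inequality. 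The proof is essentially a weighted-average argument combined with the ceiling estimate; the only non-routine ingredient is knowing that the ratio $D(\epsilon)/\Value_{\mathrm{LB}}(\epsilon)$ attains its supremum at a finite positive $\widetilde{\epsilon}$. That well-posedness claim is the main obstacle I would expect: one has to check that $D(\epsilon)/\Value_{\mathrm{LB}}(\epsilon)\to 0$ as $\epsilon\to 0^+$ (since $\Value_{\mathrm{LB}}(\epsilon)$ stays bounded away from zero through the factor involving $\theta/(2\theta-1)$ while $D(\epsilon)\to 0$) and as $\epsilon\to\infty$ (since $\Value_{\mathrm{LB}}(\epsilon)$ grows like $\cost'(\epsilon)e^{\epsilon}$ while $D(\epsilon)$ stays bounded by $\ln((1-(1-\theta))/((1-\theta)))$-type quantities), so a maximizer in $(0,\infty)$ exists, and the infimum over the arg max is attained. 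Once that analytic fact is in hand, the three displayed inequalities above complete the proof.
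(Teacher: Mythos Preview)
Your proposal is correct and follows essentially the same route as the paper's proof: use the maximizing property of $\widetilde{\epsilon}$ to get $\Value_{\mathrm{LB}}(\epsilon_i)\ge \frac{\Value_{\mathrm{LB}}(\widetilde{\epsilon})}{D(\widetilde{\epsilon})}D(\epsilon_i)$, sum, invoke feasibility $\sum_i D(\epsilon_i)\ge\ln(1/\err)$, and then use the ceiling bound $\widetilde{N}-1<\ln(1/\err)/D(\widetilde{\epsilon})$. One small correction: the well-definedness of $\widetilde{\epsilon}$ is not established elsewhere but is proved \emph{within} this lemma's proof in the paper (by checking continuity of $r(\epsilon)=D(\epsilon)/\Value_{\mathrm{LB}}(\epsilon)$ and the limits $r(\epsilon)\to 0$ at both endpoints), so you would need to include that argument rather than quote it; your sketch of it is along the right lines, though your assertion that $\Value_{\mathrm{LB}}(\epsilon)$ stays bounded away from zero near $\epsilon=0$ tacitly assumes $\cost'(0)>0$, which the paper's stated hypotheses on $\cost$ do not guarantee.
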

Lemma~\ref{lem:lower-P2} states that the total expected payment of the data collector is at least $(\widetilde{N}-1)\Value_{\mathrm{LB}}(\widetilde{\epsilon})$. Note that the value given by the genie-aided payment mechanism $\widehat{\paymecgenbf}^{(\widetilde{\epsilon})}$ for $\widetilde{N}$ \players{} is $\widetilde{N}\Value_{\mathrm{LB}}(\widetilde{\epsilon})$, which is at most one $\Value_{\mathrm{LB}}(\widetilde{\epsilon})$ away from the optimal value of \eqref{eq:payment-accuracy-lower}. We can think of $\Value_{\mathrm{LB}}(\epsilon)$ as the price for $\epsilon$ units of privacy and $D(\epsilon)$ as the quality that the \principal{} gets from $\epsilon$ units of privacy due to its contribution to the accuracy. Then the intuition for $(\widetilde{N},\widetilde{\epsilon},\dots,\widetilde{\epsilon})$ to be a near-optimal choice is that the privacy level $\widetilde{\epsilon}$ gives the best quality\slash price ratio and $\widetilde{N}$ is the fewest number of \players{} to meet the accuracy target. The proof of Lemma~\ref{lem:lower-P2} is presented is Appendix~\ref{app:proof-lower-P2}. With this lemma, the lower bound on $F(\err)$ in Theorem~\ref{thm:payment--accuracy} is straightforward.
\end{sloppypar}

\subsubsection{Upper Bound}
\begin{lemma}\label{lem:upper-mec}
Choose the parameters in the payment mechanism $\paymecbf$ defined in Section~\ref{sec:paymec} to be $\epsilon=\widetilde{\epsilon}$ and $N=\widetilde{N}$, where $\widetilde{\epsilon}$ and $\widetilde{N}$ are defined in \eqref{eq:tilda}. Then in the Nash equilibrium $\bm{\sigma}^{(\widetilde{\epsilon})}$ of $\paymecgenbf^{(\widetilde{N},\widetilde{\epsilon})}$, the accuracy target $\err$ can be achieved, and the total expected payment is upper bounded as
\begin{equation*}
\expect_{\bm{\sigma}^{(\widetilde{\epsilon})}}\Biggl[\sum_{i=1}^{\widetilde{N}}\paymecgen^{(\widetilde{N},\widetilde{\epsilon})}_i(\Databf)\Biggr]\le\widetilde{N}\Value_{\mathrm{LB}}(\widetilde{\epsilon})+O(\err\ln(1/\err)).
\end{equation*}
\end{lemma}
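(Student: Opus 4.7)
The plan is to verify the two assertions of the lemma in turn: first that $\bm{\sigma}^{(\widetilde{\epsilon})}$ satisfies the accuracy constraint, and then that the total equilibrium payment matches the claimed bound. Both pieces are obtained by combining the previously established guarantees of Theorem~\ref{thm:upper} with the definitions \eqref{eq:tilda} of $\widetilde{\epsilon}$ and $\widetilde{N}$.

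For the accuracy target, Theorem~\ref{thm:upper} already gives that $\bm{\sigma}^{(\widetilde{\epsilon})}$ is a Nash equilibrium of $\paymecgenbf^{(\widetilde{N},\widetilde{\epsilon})}$ in which every \player{} uses the $\widetilde{\epsilon}$-strategy. Consequently, the Chernoff exponent in the payment--accuracy constraint equals $\sum_{i=1}^{\widetilde{N}} D(\widetilde{\epsilon}) = \widetilde{N} D(\widetilde{\epsilon})$, and the choice $\widetilde{N} = \lceil \ln(1/\err)/D(\widetilde{\epsilon}) \rceil$ forces $\widetilde{N} D(\widetilde{\epsilon}) \ge \ln(1/\err)$, so $e^{-\widetilde{N} D(\widetilde{\epsilon})} \le \err$ as required.

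For the payment bound, I would first apply Theorem~\ref{thm:upper} to get an expected payment of at most $\Value_{\mathrm{LB}}(\widetilde{\epsilon}) + O(e^{-\widetilde{N} d})$ to each \player{}, where $d$ is the quantity in \eqref{eq:d} evaluated at $\epsilon = \widetilde{\epsilon}$. Summing over the $\widetilde{N}$ \players{} gives
\begin{equation*}
\expect_{\bm{\sigma}^{(\widetilde{\epsilon})}}\Biggl[\sum_{i=1}^{\widetilde{N}} \paymecgen^{(\widetilde{N},\widetilde{\epsilon})}_i(\Databf)\Biggr] \le \widetilde{N}\,\Value_{\mathrm{LB}}(\widetilde{\epsilon}) + \widetilde{N}\cdot O\bigl(e^{-\widetilde{N} d}\bigr).
\end{equation*}
The crucial observation needed to close the argument is the identification $d = D(\widetilde{\epsilon})$: under the $\widetilde{\epsilon}$-strategy the reported data is $\mathrm{Bern}(\alpha)$ given $\State=1$ and $\mathrm{Bern}(1-\alpha)$ given $\State=0$, and by the $\alpha \leftrightarrow 1-\alpha$ symmetry the Chernoff-information minimizer is at parameter $1/2$, which gives $D(\widetilde{\epsilon}) = -\ln(2\sqrt{\alpha(1-\alpha)}) = \tfrac{1}{2}\ln\tfrac{1}{4\alpha(1-\alpha)}$, matching \eqref{eq:d} exactly. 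Combined with the accuracy inequality from the previous paragraph, this gives $e^{-\widetilde{N} d} \le \err$. Since $\widetilde{N} = O(\ln(1/\err))$ as $\err \to 0$, the residual $\widetilde{N}\cdot O(e^{-\widetilde{N} d})$ is $O(\err\ln(1/\err))$, yielding the desired upper bound.

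The main obstacle is precisely this bridging identification $d = D(\widetilde{\epsilon})$ between the Chernoff-bound exponent that governs the convergence rate in Theorem~\ref{thm:upper} and the Chernoff information that governs the hypothesis-testing accuracy; once it is in hand, the rest is routine bookkeeping. A secondary point to confirm is that the hidden constant in the $O(e^{-Nd})$ of Theorem~\ref{thm:upper} depends only on $\widetilde{\epsilon}$ and the problem primitives ($\theta$, $P_{\State}$, $\cost'$) and not on $\widetilde{N}$, so that the factor of $\widetilde{N}$ we pick up by summing over \players{} only contributes the benign $\ln(1/\err)$ multiplier and does not inflate the asymptotic order.
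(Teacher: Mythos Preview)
Your proposal is correct and follows the route the paper intends: the paper states only that ``this lemma follows from Theorem~\ref{thm:upper} and we omit the proof here,'' and your argument supplies exactly those details. In particular, your key identification $d = D(\widetilde{\epsilon})$ is confirmed by comparing \eqref{eq:d} with the explicit formula for $D(\epsilon)$ given in the proof of Lemma~\ref{lem:lower-P2} (Appendix~\ref{app:proof-lower-P2}), and your check that the implied constant in Theorem~\ref{thm:upper} is independent of $N$ (since $\widetilde{\epsilon}$ is a fixed quantity not depending on $\err$) is the right observation to justify the $O(\err\ln(1/\err))$ residual after summing.
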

This lemma follows from Theorem~\ref{thm:upper} and we omit the proof here. Since the payment mechanism $\paymecbf$ together with $\epsilon=\widetilde{\epsilon}$ and $N=\widetilde{N}$ is a feasible solution of the payment--accuracy problem, the upper bound in this lemma gives the upper bound on $F(\err)$ in Theorem~\ref{thm:payment--accuracy}.

\subsection{Discussions on the Accuracy Metric}
When we study the relation between payment and accuracy, the accuracy can also be measured by the best achievable probability of error, defined as
\begin{equation*}
p_e=\inf_{\psi}\Pr_{(\sigma_i^{(\epsilon_i)})}(\psi(\Databf)\neq \State),
\end{equation*}
where $\psi(\databf)$ is a decision function, with $\psi(\databf)=0$ implying that $H_0$ is accepted and $\psi(\databf)=1$ implying that $H_1$ is accepted. However, $p_e$ is difficult to deal with analytically since its exact form in terms of $\epsilon_1,\epsilon_2,\dots,\epsilon_N$ is intractable.

We measure the accuracy based on the Chernoff information, which is an information-theoretic metric closely related to $p_e$. It can be proved by the Bhattacharyya bound \cite{Kai_67} that at the strategy profile $(\sigma_i^{(\epsilon_i)})$,
\begin{equation}\label{eq:pe-upper}
p_e\le e^{-\sum_{i=1}^N D(\epsilon_i)}.
\end{equation}
Therefore, if we want to guarantee that $p_e\le p_e^{\text{max}}$ for some maximum allowable probability of error $p_e^{\text{max}}$, we can choose $\err=p_e^{\text{max}}$ in the payment--accuracy problem. In fact, the metric based on the Chernoff information is very close to the metric $p_e$, since the upper bound \eqref{eq:pe-upper} is tight in exponent when all the $\epsilon_i$ are the same, i.e., when the reported data is i.i.d.\ given the hypothesis.

\section{Conclusions}
In this paper, we studied ``the value of privacy'' under a game-theoretic model, where a \principal{} pays strategic \players{} to buy their private data for a learning purpose. The \players{} do not consider the \principal{} to be trustworthy, and thus experience a cost of privacy loss during data reporting. The value of $\epsilon$ units of privacy is measured by the minimum payment of all nonnegative payment mechanisms under which an \player{}'s best response in a Nash equilibrium is to report the data with a privacy level of $\epsilon$. We derived asymptotically tight lower and upper bounds on the value of privacy as the number of \players{} becomes large, where the upper bound was given by a designed payment mechanism $\paymecbf$. We further applied these fundamental limits to find the minimum total payment for the \principal{} to achieve certain learning accuracy target, and derived lower and upper bounds on the minimum payment. The total payment of the designed mechanism $\paymecbf$ with properly chosen parameters is at most one \player{}'s payment away from the minimum. It would be of great interest to study the value of privacy under $(\epsilon,\delta)$-differential privacy, and to extend our results to more general models of private and reported data, e.g., models with larger alphabets for the state, the signals and the reported data.

\section{Acknowledgement}
This work was supported in part by the NSF under Grant ECCS-1255425.

\bibliographystyle{acm}
\bibliography{U:/bib/inlab-refs.bib}

\appendix
\numberwithin{equation}{section}

\def\thmNoLower{\ref{thm:lower}}
\def\lemNoEqui{\ref{lem:equi-char}}
\section{Proof of Lemma~\lemNoEqui}\label{app:lem:equi-char}
\begin{proof}
Consider any nonnegative payment mechanism $\paymecgenbf$ and a Nash equilibrium of it, denoted by $\bm{\sigma}$. For an \player{}~$i$, consider any strategy $\sigma_i'$ of \player{}~$i$ and let
\begin{equation*}
\begin{split}
p_1=\Pr_{\sigma_i'}(\Data_i=1\mid \Signal_i=1),\quad& q_1=\Pr_{\sigma_i'}(\Data_i=0\mid \Signal_i=1),\\
p_0=\Pr_{\sigma_i'}(\Data_i=1\mid \Signal_i=0),\quad& q_0=\Pr_{\sigma_i'}(\Data_i=0\mid \Signal_i=0).
\end{split}
\end{equation*}
When other \players{} follow $\bm{\sigma}_{-i}$, the expected utility of \player{}~$i$ at the strategy $\sigma_i'$ is a function of $(p_1,p_0,q_1,q_0)$, denoted by $U_i(p_1,p_0,q_1,q_0)$. We derive the form of this function below. The expected payment to \player{}~$i$ can be written as
\begin{align*}
&\mspace{23mu}\expect_{(\sigma_i',\bm{\sigma}_{-i})}[\paymecgen_i(\Databf)]\\
&=\sum_{\data_i,\signalbf}\biggl\{\Pr_{\sigma_i'}(\Data_i=\data_i,\Signal_i=\signal_i,\Signalbf_{-i}=\signalbf_{-i})\\
&\mspace{63mu}\cdot\expect_{(\sigma_i',\bm{\sigma}_{-i})}[\paymecgen_i(\Databf)\mid \Data_i=\data_i,\Signal_i=\signal_i,\Signalbf_{-i}=\signalbf_{-i}]\biggr\}\\
&=\sum_{\data_i,\signalbf}\biggl\{\Pr_{\sigma_i'}(\Data_i=\data_i\mid \Signal_i=\signal_i)\Pr(\Signal_i=\signal_i,\Signalbf_{-i}=\signalbf_{-i})\\
&\mspace{63mu}\cdot\expect_{(\sigma_i',\bm{\sigma}_{-i})}[\paymecgen_i(\Databf)\mid \Data_i=\data_i,\Signalbf_{-i}=\signalbf_{-i}]\biggr\},
\end{align*}
where we have used the fact that $\Data_i$ is independent from $\Signalbf_{-i}$ given $\Signal_i$, and $\Databf_{-i}$ is independent from $\Signal_i$ given $\Data_i$ and $\Signalbf_{-i}$. The term $\expect_{(\sigma_i',\bm{\sigma}_{-i})}[\paymecgen_i(\Databf)\mid \Data_i=\data_i,\Signalbf_{-i}=\signalbf_{-i}]$ does not depend on the strategy of \player{}~$i$ since
\begin{align*}
&\mspace{23mu}\expect_{(\sigma_i',\bm{\sigma}_{-i})}[\paymecgen_i(\Databf)\mid \Data_i=\data_i,\Signalbf_{-i}=\signalbf_{-i}]\\
&=\expect_{(\sigma_i',\bm{\sigma}_{-i})}[\paymecgen_i(\data_i,\Databf_{-i})\mid \Data_i=\data_i,\Signalbf_{-i}=\signalbf_{-i}]\\
&=\expect_{\bm{\sigma}_{-i}}[\paymecgen_i(\data_i,\Databf_{-i})\mid \Signalbf_{-i}=\signalbf_{-i}],
\end{align*}
where the last equality follow from the conditional independence between $\Data_i$ and $\Databf_{-i}$ given $\Signalbf_{-i}$. Then
\begin{align*}
&\mspace{23mu}\expect_{(\sigma_i',\bm{\sigma}_{-i})}[\paymecgen_i(\Databf)]\\
&=\sum_{\data_i,\signal_i}\biggl\{\Pr_{\sigma_i'}(\Data_i=\data_i\mid \Signal_i=\signal_i)\\
&\mspace{63mu}\cdot\sum_{\signalbf_{-i}}\Bigl(\Pr(\Signalbf=\signalbf)\expect_{\bm{\sigma}_{-i}}[\paymecgen_i(\data_i,\Databf_{-i})\mid \Signalbf_{-i}=\signalbf_{-i}]\Bigr)\biggr\}\\
&=K_1p_1+K_0p_0+L_1q_1+L_0q_0,
\end{align*}
where
\begin{align*}
K_{\signal_i}&=\sum_{\signalbf_{-i}}\Bigl(\Pr(\Signal_i=\signal_i,\Signalbf_{-i}=\signalbf_{-i})\\
&\mspace{54mu}\cdot\expect_{\bm{\sigma}_{-i}}[\paymecgen_i(1,\Databf_{-i})\mid \Signalbf_{-i}=\signalbf_{-i}]\Bigr),\mspace{6mu}\signal_i\in\{0,1\},
\end{align*}
are the expected payment received by \player{}~$i$ when she reports $1$, weighted by $\Pr(\Signal_i=1)$ and $\Pr(\Signal_i=0)$ when her private signal is $1$ and $0$, respectively, and
\begin{align*}
L_{\signal_i}&=\sum_{\signalbf_{-i}}\Bigl(\Pr(\Signal_i=\signal_i,\Signalbf_{-i}=\signalbf_{-i})\\
&\mspace{54mu}\cdot\expect_{\bm{\sigma}_{-i}}[\paymecgen_i(0,\Databf_{-i})\mid \Signalbf_{-i}=\signalbf_{-i}]\Bigr),\mspace{6mu}\signal_i\in\{0,1\},
\end{align*}
are the expected payment received by \player{}~$i$ when she reports $0$, weighted by $\Pr(\Signal_i=1)$ and $\Pr(\Signal_i=0)$ when her private signal is $1$ and $0$, respectively. Note that $K_1$, $K_0$, $L_1$ and $L_0$ do not depend on $p_1$, $p_0$, $q_1$, and $q_0$. 
The privacy level of the reported data at strategy $\sigma_i'$ is
\begin{equation*}
\zeta(\sigma_i')=\max\biggl\{\biggl|\ln\frac{p_1}{p_0}\biggr|,\biggl|\ln\frac{1-p_1}{1-p_0}\biggr|,\biggl|\ln\frac{q_1}{q_0}\biggr|,\biggl|\ln\frac{1-q_1}{1-q_0}\biggr|,
\biggl|\ln\frac{1-p_1-q_1}{1-p_0-q_0}\biggr|,\biggl|\ln\frac{p_1+q_1}{p_0+q_0}\biggr|\biggr\}.
\end{equation*}
With a little abuse of notation, we regard $\zeta(\sigma_i')$ as a function $\zeta(p_1,p_0,q_1,q_0)$. The expected utility of \player{}~$i$ can thus be written as
\begin{equation*}
\begin{split}
&\mspace{23mu}U_i(p_1,p_0,q_1,q_0)\\
&=\expect_{(\sigma_i',\bm{\sigma}_{-i})}[R_i(\Databf)-\cost(\zeta(\sigma_i'))]\\
&=K_1p_1+K_0p_0+L_1q_1+L_0q_0-\cost(\zeta(p_1,p_0,q_1,q_0)).
\end{split}
\end{equation*}

Next we discuss the best response of \player{}~$i$ for different cases of the values of $K_1$, $K_0$, $L_1$ and $L_0$. Since $\paymecgenbf$ is a nonnegative payment mechanism, these values are all nonnegative. Notice that for any $\signal_i\in\{0,1\},\signalbf_{-i}\in\{0,1\}^{N-1}$, $\Pr(\Signal_i=\signal_i,\Signalbf_{-i}=\signalbf_{-i})>0$. Therefore, $K_1$ and $K_0$ are either both equal to zero or both positive. The same argument also applies to $L_1$ and $L_0$. (1) When all of $K_1$, $K_0$, $L_1$ and $L_0$ are zero, a best response of \player{}~$i$ should minimize the privacy cost. Thus the strategy of \player{}~$i$ in a Nash equilibrium is to report $\Data_i$ that is independent of $\Signal_i$ so the privacy cost is zero. (2) When $K_1$ and $K_0$ are positive but $L_1$ and $L_0$ are zero, the best response of \player{}~$i$ is to always report $\Data_i=1$. (3) Similarly, when $K_1$ and $K_0$ are zero but $L_1$ and $L_0$ are positive, the best response of \player{}~$i$ is to always report $\Data_i=0$. We can see that the strategy of \player{}~$i$ in a Nash equilibrium is non-informative in all the three cases above. (4) In the remainder of this proof, we focus on the case that all of $K_1$, $K_0$, $L_1$ and $L_0$ are positive.

If a best response of \player{}~$i$ is to always not participate, then it is a non-informative strategy. Otherwise, a best response of \player{}~$i$ is specified by an optimal solution of the following optimization problem:
\begin{equation}\label{eq:opt}\tag{P}
\begin{split}
\underset{p_1,p_0,q_1,q_0}{\text{max}}\mspace{30mu} & U_i(p_1,p_0,q_1,q_0)\\
\text{subject to}\mspace{30mu}& 0\le p_1\le 1,0\le q_1 \le 1,\nonumber\\
& 0\le p_1+q_1\le 1,\nonumber\\
& 0\le p_0 \le 1,0\le q_0\le 1,\nonumber\\
& 0\le p_0+q_0\le 1,\nonumber\\
& p_1+q_1+p_0+q_0>0.\nonumber
\end{split}
\end{equation}

First, we prove that an optimal solution $(p_1^*,p_0^*,q_1^*,q_0^*)$ of \eqref{eq:opt} must satisfy that $p_1^*+q_1^*=p_0^*+q_0^*$. Suppose not. Without loss of generality we assume that $p_1^*+q_1^*<p_0^*+q_0^*$. We will find another solution $(p_1',p_0^*,q_1',q_0^*)$ that yields better utility, which contradicts the optimality of $(p_1^*,p_0^*,q_1^*,q_0^*)$.

\begin{sloppypar}
Since we assume that $p_1^*+q_1^*<p_0^*+q_0^*$, then at least one of the following two inequality holds: $p_1^*<p_0^*$, $q_1^*<q_0^*$. Still without loss of generality we assume that $p_1^*<p_0^*$. Then if $q_1^*< q_0^*$, let $p_1'=p_0^*$ and $q_1'=q_0^*$. Since $K_1$ and $L_1$ are positive, $(p_1',p_0^*,q_1',q_0^*)$ yields higher payment. It is easy to verify that $\zeta(p_1',p_0^*,q_1',q_0^*)<\zeta(p_1^*,p_0^*,q_1^*,q_0^*)$. Thus $(p_1',p_0^*,q_1',q_0^*)$ yields better utility. For the other case that $q_1^*\ge q_0^*$, let $p_1'=p_0^*+q_0^*-q_1^*$ and $q_1'=q_1^*$. Then $p_1^*<p_1'\le p_0^*$. Since $K_1$ is positive, $(p_1',p_0^*,q_1',q_0^*)$ yields higher payment. To check the privacy cost, notice that
\begin{align*}
\zeta(p_1^*,p_0^*,q_1^*,q_0^*)&=\max\biggl\{\ln\frac{p_0^*}{p_1^*},\ln\frac{1-p_1^*}{1-p_0^*},\ln\frac{q_1^*}{q_0^*},\ln\frac{1-q_0^*}{1-q_1^*},\\
&\mspace{72mu}\ln\frac{1-p_1^*-q_1^*}{1-p_0^*-q_0^*},\ln\frac{p_0^*+q_0^*}{p_1^*+q_1^*}\biggr\},
\end{align*}
and
\begin{equation*}
\zeta(p_1',p_0^*,q_1',q_0^*)=\max\biggl\{\ln\frac{p_0^*}{p_1'},\ln\frac{1-p_1'}{1-p_0^*},\ln\frac{q_1'}{q_0^*},\ln\frac{1-q_0^*}{1-q_1'}\biggr\}.
\end{equation*}
Since $p_1'>p_1^*$ and $q_1'=q_1^*$, $\zeta(p_1',p_0^*,q_1',q_0^*)\le \zeta(p_1^*,p_0^*,q_1^*,q_0^*)$. Thus $(p_1',p_0^*,q_1',q_0^*)$ yields better utility. Therefore, by contradiction, we must have $p_1^*+q_1^*=p_0^*+q_0^*$.
\end{sloppypar}

Next, we prove that an optimal solution $(p_1^*,p_0^*,q_1^*,q_0^*)$ must satisfy that $p_1^*+q_1^*=p_0^*+q_0^*=1$. Still, suppose not. Then we will find another solution $(p_1',p_0',q_1',q_0')$ that yields better utility. Let
\begin{align*}
p_1'=\frac{p_1^*}{p_1^*+q_1^*},\quad q_1'=\frac{q_1^*}{p_1^*+q_1^*},\\ p_0'=\frac{p_0^*}{p_0^*+q_0^*},\quad q_0'=\frac{q_0^*}{p_0^*+q_0^*}.
\end{align*}
We have seen that $p_1^*+q_1^*=p_0^*+q_0^*$. By the last constraint of \eqref{eq:opt}, $p_1^*+q_1^*=p_0^*+q_0^*>0$. Since we assume that $p_1^*+q_1^*$ and $p_0^*+q_0^*$ are not equal to $1$, they must be less than $1$. Since $K_1$, $K_0$, $L_1$ and $L_0$ are positive, $(p_1',p_0',q_1',q_0')$ yields higher payment. It is easy to verify that $\zeta(p_1',p_0',q_1',q_0')\le \zeta(p_1^*,p_0^*,q_1^*,q_0^*)$. Thus $(p_1',p_0',q_1',q_0')$ yields better utility, which contradicts the optimality of $(p_1^*,p_0^*,q_1^*,q_0^*)$.

By the results above, to find an optimal solution of \eqref{eq:opt}, we can focus on feasible $(p_1,p_0,q_1,q_0)$ such that $q_1=1-p_1$ and $q_0=1-p_0$. Let
\begin{equation*}
\overline{U}_i(p_1,p_0)=\overline{K}_1p_1+\overline{K}_0p_0+\overline{K}-\cost(\zeta(p_1,p_0)),
\end{equation*}
where $\overline{K}_1=K_1-L_1$, $\overline{K}_0=K_0-L_0$, $\overline{K}=L_1+L_0$, and with a little abuse of notation,
\begin{equation*}
\zeta(p_1,p_0)=\max\biggl\{\biggl|\ln\frac{p_1}{p_0}\biggr|,\biggl|\ln\frac{1-p_1}{1-p_0}\biggr|\biggr\}.
\end{equation*}
Then $(p_1^*,p_0^*,q_1^*,q_0^*)$ is an optimal solution of \eqref{eq:opt} if and only if $(p_1^*,p_0^*)$ is an optimal solution of the following optimization problem \ref{eq:opt'}:
\begin{equation}
\underset{0\le p_1\le 1,0\le p_0\le 1}{\text{max}}\quad\overline{U}_i(p_1,p_0)\label{eq:opt'}\tag{P'}
\end{equation}
Let $(p_1^*,p_0^*)$ be an optimal solution of \eqref{eq:opt'}. The strategy specified by $(p_1^*,p_0^*,1-p_1^*,1-p_0^*)$ is a symmetric randomized response if $p_1^*+p_0^*=1$, and is non-informative if $p_1^*=p_0^*$. Thus it suffices to prove that if $p_1^*+p_0^*\neq 1$, then $p_1^*=p_0^*$. We divide the case that $p_1^*+p_0^*\neq 1$ into two cases: $p_1^*+p_0^*>1$ and $p_1^*+p_0^*<1$, and prove that $p_1^*=p_0^*$ in both cases.

\noindent\textbf{Case 1:} $p_1^*+p_0^*>1$. Suppose, for contradiction, $p_1^*\neq p_0^*$.

\begin{sloppypar}
If $p_1^*=1$, then
\begin{equation*}
\max\biggl\{\biggl|\ln\frac{p_1^*}{p_0^*}\biggr|,\biggl|\ln\frac{1-p_1^*}{1-p_0^*}\biggr|\biggr\}=+\infty.
\end{equation*}
Consider $p_1=1$ and $p_0=1$. Then by the convention
\begin{equation*}
\max\biggl\{\biggl|\ln\frac{p_1}{p_0}\biggr|,\biggl|\ln\frac{1-p_1}{1-p_0}\biggr|\biggr\}=0.
\end{equation*}
Since $\overline{U}_i(p_1^*,p_0^*)\ge \overline{U}_i(p_1,p_0)$, then $\overline{K}_1+\overline{K}_0p_0^*-\cost(+\infty)\ge\overline{K}_1+\overline{K}_0-\cost(0)$. Thus
\begin{equation}\label{eq:cost-infty}
\cost(+\infty)\le-\overline{K}_0(1-p_0^*)< +\infty.
\end{equation}
Since $\cost(+\infty)\ge 0$, this also indicates that $\overline{K}_0\le 0$. Next consider $p_1=1$ and $p_0=0$. Then
\begin{equation*}
\max\biggl\{\biggl|\ln\frac{p_1}{p_0}\biggr|,\biggl|\ln\frac{1-p_1}{1-p_0}\biggr|\biggr\}=+\infty.
\end{equation*}
Since $\overline{U}_i(p_1^*,p_0^*)\ge\overline{U}_i(p_1,p_0)$, then $\overline{K}_1+\overline{K}_0p_0^*-\cost(+\infty)\ge\overline{K}_1-\cost(+\infty)$. Thus $\overline{K}_0\ge 0$, where we have used the fact that $\cost(+\infty)<+\infty$. Combining the above arguments we have $\overline{K}_0=0$. However, by \eqref{eq:cost-infty}, this indicates that $\cost(+\infty)=0$, which contradicts the assumption that $\cost(\xi)=0$ only for $\xi=0$. Therefore, $p_1^*\neq 1$. Following similar arguments we have $p_0^*\neq 1$, either.
\end{sloppypar}

\begin{sloppypar}
If $p_1^*>p_0^*$, then noticing that $p_1^*+p_0^*>1$ we have
\begin{equation*}
\max\biggl\{\biggl|\ln\frac{p_1^*}{p_0^*}\biggr|,\biggl|\ln\frac{1-p_1^*}{1-p_0^*}\biggr|\biggr\}=\ln\frac{1-p_0^*}{1-p_1^*}.
\end{equation*}
Consider
\begin{align*}
p_1&=\frac{\frac{1-p_0^*}{1-p_1^*}}{\frac{1-p_0^*}{1-p_1^*}+1},\quad p_0=\frac{1}{\frac{1-p_0^*}{1-p_1^*}+1}.
\end{align*}
Then
\begin{equation*}
\max\biggl\{\biggl|\ln\frac{p_1}{p_0}\biggr|,\biggl|\ln\frac{1-p_1}{1-p_0}\biggr|\biggr\}=\ln\frac{1-p_0^*}{1-p_1^*}.
\end{equation*}
Since $\overline{U}_i(p_1^*,p_0^*)\ge \overline{U}_i(p_1,p_0)$, then
\begin{align*}
&\mspace{23mu}\overline{K}_1p_1^*+\overline{K}_0p_0^*-\cost\Biggl(\ln\frac{1-p_0^*}{1-p_1^*}\Biggr)+\overline{K}\\
&\ge\overline{K}_1p_1+\overline{K}_0p_0-\cost\Biggl(\ln\frac{1-p_0^*}{1-p_1^*}\Biggr)+\overline{K}.
\end{align*}
Thus, inserting $p_1$ and $p_0$ we obtain
\begin{equation*}
\overline{K}_1(1-p_1^*)+\overline{K}_0(1-p_0^*)\ge 0,
\end{equation*}
where we have used the condition $p_1^*+p_0^*>1$. Next still consider $p_1=p_0=1$. Since $\overline{U}_i(p_1^*,p_0^*)\ge \overline{U}_i(p_1,p_0)$, then
\begin{equation*}
\overline{K}_1p_1^*+\overline{K}_0p_0^*-\cost\Biggl(\ln\frac{1-p_0^*}{1-p_1^*}\Biggr)\ge \overline{K}_1+\overline{K}_0-\cost(0).
\end{equation*}
Thus
\begin{equation*}
-\cost\Biggl(\ln\frac{1-p_0^*}{1-p_1^*}\Biggr)\ge\overline{K}_1(1-p_1^*)+\overline{K}_0(1-p_0^*)\ge 0.
\end{equation*}
which indicates that
\begin{equation*}
\ln\frac{1-p_0^*}{1-p_1^*}=0.
\end{equation*}
Therefore, $p_1^*=p_0^*$, which contradicts the assumption.
\end{sloppypar}

\begin{sloppypar}
If $p_1^*<p_0^*$, then noticing that $p_1^*+p_0^*>1$ we have
\begin{equation*}
\max\biggl\{\biggl|\ln\frac{p_1^*}{p_0^*}\biggr|,\biggl|\ln\frac{1-p_1^*}{1-p_0^*}\biggr|\biggr\}=\ln\frac{1-p_1^*}{1-p_0^*}.
\end{equation*}
We use similar arguments to obtain contradiction. Consider
\begin{align*}
p_1=\frac{1}{\frac{1-p_1^*}{1-p_0^*}+1},\quad p_0=\frac{\frac{1-p_1^*}{1-p_0^*}}{\frac{1-p_1^*}{1-p_0^*}+1}.
\end{align*}
Then since $\overline{U}_i(p_1^*,p_0^*)\ge \overline{U}_i(p_1,p_0)$, we have $\overline{K}_1(1-p_1^*)+\overline{K}_0(1-p_0^*)\ge 0$. Next still consider $p_1=p_0=1$. Then since $\overline{U}_i(p_1^*,p_0^*)\ge \overline{U}_i(p_1,p_0)$, we have
\begin{equation*}
-\cost\Biggl(\ln\frac{1-p_1^*}{1-p_0^*}\Biggr)\ge\overline{K}_1(1-p_1^*)+\overline{K}_0(1-p_0^*)\ge 0.
\end{equation*}
which again indicates that
\begin{equation*}
\ln\frac{1-p_1^*}{1-p_0^*}=0.
\end{equation*}
Therefore, $p_1^*=p_0^*$, which contradicts the assumption.
\end{sloppypar}

In summary, for the case that $p_1^*+p_0^*>1$, $p_1^*=p_0^*$.

\noindent\textbf{Case 2:} $p_1^*+p_0^*<1$. Suppose, for contradiction, $p_1^*\neq p_0^*$. Then we obtain contradictions by similar arguments as used in Case~1.

\begin{sloppypar}
First, by comparing $\overline{U}_i(p_1^*,p_0^*)$ with $\overline{U}_i(0,0)$ and $\overline{U}_i(0,1)$ we can prove that $p_1^*\neq 0, p_0^*\neq 0$. If $p_1^*>p_0^*$, then by comparing $\overline{U}_i(p_1^*,p_0^*)$ with the expected utility at
\begin{equation*}
p_1=\frac{\frac{p_1^*}{p_0^*}}{\frac{p_1^*}{p_0^*}+1},\quad p_0=\frac{1}{\frac{p_1^*}{p_0^*}+1},
\end{equation*}
we have $\overline{K}_1p_1^*+\overline{K}_0p_0^*\le 0$. By comparing $\overline{U}_i(p_1^*,p_0^*)$ with $\overline{U}_i(0,0)$, we have
\begin{equation*}
\cost\Biggl(\ln\frac{1-p_1^*}{1-p_0^*}\Biggr)\le\overline{K}_1p_1^*+\overline{K}_0p_0^*\le 0.
\end{equation*}
Therefore, $p_1^*=p_0^*$, which contradicts the assumption. If $p_1^*<p_0^*$, then by comparing $\overline{U}_i(p_1^*,p_0^*)$ with the expected utility at
\begin{equation*}
p_1=\frac{1}{\frac{p_0^*}{p_1^*}+1},\quad p_0=\frac{\frac{p_0^*}{p_1^*}}{\frac{p_0^*}{p_1^*}+1},
\end{equation*}
we have $\overline{K}_1p_1^*+\overline{K}_0p_0^*\le 0$. By comparing $\overline{U}_i(p_1^*,p_0^*)$ with $\overline{U}_i(0,0)$, we have
\begin{equation*}
\cost\Biggl(\ln\frac{1-p_1^*}{1-p_0^*}\Biggr)\le\overline{K}_1p_1^*+\overline{K}_0p_0^*\le 0.
\end{equation*}
Therefore, $p_1^*=p_0^*$, which contradicts the assumption. In summary, for the case that $p_1^*+p_0^*<1$, we also have $p_1^*=p_0^*$ by similar arguments as used in Case~1. This completes the proof.
\end{sloppypar}
\end{proof}

\def\lemNoPositive{\ref{lem:positive-eps}}
\section{Proof of Lemma~\lemNoPositive}\label{app:lem:positive-eps}
\begin{proof}
For any nonnegative payment mechanism $\paymecgenbf$ in which the strategy profile $(\sigma_i^{(-\epsilon)},\bm{\sigma}_{-i})$ is a Nash equilibrium, consider the payment mechanism $\paymecgenbf'$ defined by
\begin{equation*}
\paymecgenbf'(\data_i,\databf_{-i})=\paymecgenbf(1-\data_i,\databf_{-i}).
\end{equation*}

We first prove that $(\sigma_i^{(\epsilon)},\bm{\sigma}_{-i})$ is a Nash equilibrium in $\paymecgenbf'$. For an \player{}~$i$, consider any strategy $\sigma_i'$ of \player{}~$i$ and let
\begin{equation*}
\begin{split}
p_1=\Pr_{\sigma_i'}(\Data_i=1\mid \Signal_i=1),\quad& q_1=\Pr_{\sigma_i'}(\Data_i=0\mid \Signal_i=1),\\
p_0=\Pr_{\sigma_i'}(\Data_i=1\mid \Signal_i=0),\quad& q_0=\Pr_{\sigma_i'}(\Data_i=0\mid \Signal_i=0).
\end{split}
\end{equation*}
We say $(p_1,p_0,q_1,q_0)$ is feasible if it satisfies that
\begin{gather*}
0\le p_1\le 1,\quad 0\le q_1\le 1,\quad 0\le p_1+q_1\le 1,\\
0\le p_0\le 1,\quad 0\le q_0\le 1,\quad 0\le p_0+q_0\le 1.
\end{gather*}
Then following the notation in the proof of Lemma~\ref{lem:equi-char}, in the mechanism $\paymecgenbf'$ and $\paymecgenbf$, we denote the expected utility of \player{}~$i$ at $\sigma_i'$ when other \players{} follow $\bm{\sigma}_{-i}$ by $U_i'(p_1,p_0,q_1,q_0)$ and $U_i(p_1,p_0,q_1,q_0)$, respectively, and they can be written as follows:
\begin{align*}
U_i'(p_1,p_0,q_1,q_0)&=K'_{1,i}p_1+K'_{0,i}p_0+L'_{1,i}q_1+L'_{0,i}q_0\\
&\mspace{23mu}-\cost(\zeta(p_1,p_0,q_1,q_0)),\\
U_i(p_1,p_0,q_1,q_0)&=K_{1,i}p_1+K_{0,i}p_0+L_{1,i}q_1+L_{0,i}q_0\\
&\mspace{23mu}-\cost(\zeta(p_1,p_0,q_1,q_0)).
\end{align*}
We derive the relations between $K'_{1,i}$, $K'_{0,i}$, $L'_{1,i}$, $L'_{0,i}$ and $K_{1,i}$, $K_{0,i}$, $L_{1,i}$, $L_{0,i}$. By definition,
\begin{align*}
K'_{1,i}&=\sum_{\databf_{-i}}\paymecgen'_i(1,\databf_{-i})\Pr_{\bm{\sigma}_{-i}}(\Databf_{-i}=\databf_{-i},\Signal_i=1)\\
&=\sum_{\databf_{-i}}\paymecgen_i(0,\databf_{-i})\Pr_{\bm{\sigma}_{-i}}(\Databf_{-i}=\databf_{-i},\Signal_i=1)\\
&=L_{1,i}.
\end{align*}
Similarly, $K_{0,i}'=L_{0,i}$, $L_{1,i}'=K_{1,i}$ and $L_{0,i}'=K_{0,i}$. Since $(\sigma_i^{(-\epsilon)},\bm{\sigma}_{-i})$ is a Nash equilibrium in $\paymecgenbf$, for any feasible $(p_1,p_0,q_1,q_0)$,
\begin{equation*}
U_i\biggl(\frac{1}{e^{\epsilon}+1},\frac{e^{\epsilon}}{e^{\epsilon}+1},\frac{e^{\epsilon}}{e^{\epsilon}+1},\frac{1}{e^{\epsilon}+1}\biggr)\ge U_i(p_1,p_0,q_1,q_0).
\end{equation*}
Therefore, for any feasible $(p_1,p_0,q_1,q_0)$,
\begin{equation*}
U'_i\biggl(\frac{e^{\epsilon}}{e^{\epsilon}+1},\frac{1}{e^{\epsilon}+1},\frac{1}{e^{\epsilon}+1},\frac{e^{\epsilon}}{e^{\epsilon}+1}\biggr)\ge U'_i(p_1,p_0,q_1,q_0),
\end{equation*}
where we have used the symmetry property of the cost function $\cost$. This implies that $\sigma_i^{(\epsilon)}$ is a best response of \player{}~$i$ in $\paymecgenbf'$ when other \players{} follow $\bm{\sigma}_{-i}$. Now consider any \player{}~$j$ with $j\neq i$ and any strategy $\sigma_j'$. Let
\begin{equation*}
\begin{split}
p_1=\Pr_{\sigma_j'}(\Data_j=1\mid S_j=1),\quad & q_1=\Pr_{\sigma_j'}(\Data_j=0\mid S_j=1),\\
p_0=\Pr_{\sigma_j'}(\Data_j=1\mid S_j=0),\quad& q_0=\Pr_{\sigma_j'}(\Data_j=0\mid S_j=0).
\end{split}
\end{equation*}
Let $\sigma_{-j}^{(\epsilon)}=(\sigma_i^{(\epsilon)},\bm{\sigma}_{-i,j})$ and $\sigma_{-j}^{(-\epsilon)}=(\sigma_i^{(-\epsilon)},\bm{\sigma}_{-i,j})$. Then similarly, in the mechanism $\paymecgenbf'$ and $\paymecgenbf$, we denote the expected utility of \player{}~$j$ at $\sigma_j'$ when other \players{} follow $\sigma_{-j}^{(\epsilon)}$ and $\sigma_{-j}^{(-\epsilon)}$ by $U'_j(p_1,p_0,q_1,q_0)$ and $U_j(p_1,p_0,q_1,q_0)$, respectively, and they can be written as follows:
\begin{align*}
U'_j(p_1,p_0,q_1,q_0)&=K'_{1,j}p_1+K'_{0,j}p_0+L'_{1,j}q_1+L'_{0,j}q_0\\
&\mspace{23mu}-\cost(\zeta(p_1,p_0,q_1,q_0)),\\
U_j(p_1,p_0,q_1,q_0)&=K_{1,j}p_1+K_{0,j}p_0+L_{1,j}q_1+L_{0,j}q_0\\
&\mspace{23mu}-\cost(\zeta(p_1,p_0,q_1,q_0)).
\end{align*}
We derive the relations between $K'_{1,j}$, $K'_{0,j}$, $L'_{1,j}$, $L'_{0,j}$ and $K_{1,j}$, $K_{0,j}$, $L_{1,j}$, $L_{0,j}$. By definition,
\begin{align*}
K'_{1,j}&=\sum_{\databf_{-i,j}}\sum_{\data_i}\paymecgen'_i(\data_i,1,\databf_{-i,j})\\
&\mspace{23mu}\cdot\sum_{\signal_i}\Pr_{\sigma_i^{(\epsilon)}}(\Data_i=\data_i\mid \Signal_i=\signal_i)\\
&\mspace{60mu}\cdot\Pr_{\bm{\sigma}_{-i,j}}(\Databf_{-i,j}=\databf_{-i,j},\Signal_i=\signal_i,S_j=1)\\
&=\sum_{\databf_{-i,j}}\sum_{\data_i}\paymecgen_i(1-\data_i,1,\databf_{-i,j})\\
&\mspace{23mu}\cdot\sum_{\signal_i}\Pr_{\sigma_i^{(-\epsilon)}}(\Data_i=1-\data_i\mid \Signal_i=\signal_i)\\
&\mspace{60mu}\cdot\Pr_{\bm{\sigma}_{-i,j}}(\Databf_{-i,j}=\databf_{-i,j},\Signal_i=\signal_i,S_j=1)\\
&=K_{1,j}.
\end{align*}
Similarly, $K'_{0,j}=K_{0,j}$, $L'_{1,j}=L_{1,j}$, and $L'_{0,j}=L_{0,j}$. Therefore, for any feasible $(p_1,p_0,q_1,q_0)$,
\begin{equation*}
U'_j(p_1,p_0,q_1,q_0)=U_j(p_1,p_0,q_1,q_0).
\end{equation*}
Thus $\sigma_j$ is a best response of \player{}~$j$ in $\paymecgenbf'$ when other \players{} follow $\sigma_{-j}^{(\epsilon)}$. This completes the proof for $(\sigma_i^{(\epsilon)},\bm{\sigma}_{-i})$ to be a Nash equilibrium in $\paymecgenbf'$.

With the above proof, it is not hard to verify that the expected payment to each \player{} at these two equilibria of the two mechanisms are the same.
\end{proof}

\def\lemNoGenie{\ref{lem:genie}}
\section{Proof of Lemma~\lemNoGenie}\label{app:lem:genie}
\begin{proof}
Consider any payment mechanism $\paymecgenbf$ and any Nash equilibrium $\bm{\sigma}$ of it. We will construct a genie-aided mechanism $\widehat{\paymecgenbf}$ such that $\bm{\sigma}$ is also a Nash equilibrium of $\widehat{\paymecgenbf}$ and the expected payment to each \player{} at this equilibrium is the same under $\paymecgenbf$ and $\widehat{\paymecgenbf}$.

As in the proof of Lemma~\ref{lem:equi-char}, for any \player{}~$i$, consider any strategy $\sigma_i'$ of \player{}~$i$ and let
\begin{equation*}
\begin{split}
p_1=\Pr_{\sigma_i'}(\Data_i=1\mid \Signal_i=1),\quad& q_1=\Pr_{\sigma_i'}(\Data_i=0\mid \Signal_i=1),\\
p_0=\Pr_{\sigma_i'}(\Data_i=1\mid \Signal_i=0),\quad& q_0=\Pr_{\sigma_i'}(\Data_i=0\mid \Signal_i=0).
\end{split}
\end{equation*}
Then we will first derive the expected utility of \player{}~$i$ at the strategy $\sigma_i'$ as a function of $(p_1,p_0,q_1,q_0)$, denoted by $U_i(p_1,p_0,q_1,q_0)$, but using a slightly different expression from the form in Lemma~\ref{lem:equi-char}. When other \players{} follow $\bm{\sigma}_{-i}$, the expected payment to \player{}~$i$ at the strategy $\sigma_i'$ can be written as
\begin{align*}
&\mspace{23mu}\expect_{(\sigma_i',\bm{\sigma}_{-i})}[\paymecgen_i(\Databf)]\\
&=\sum_{\data_i,\signal_i,\state}\biggl\{\Pr_{\sigma_i'}(\Data_i=\data_i,\Signal_i=\signal_i,\State=\state)\\
&\mspace{81mu}\cdot\expect_{(\sigma_i',\bm{\sigma}_{-i})}[\paymecgen_i(\Databf)\mid \Data_i=\data_i, \Signal_i=\signal_i,\State=\state]\biggr\}\\
&=\sum_{\data_i,\signal_i,\state}\biggl\{\Pr_{\sigma_i'}(\Data_i=\data_i\mid \Signal_i=\signal_i)\Pr(\Signal_i=\signal_i,\State=\state)\\
&\mspace{81mu}\cdot\expect_{(\sigma_i',\bm{\sigma}_{-i})}[\paymecgen_i(\Databf)\mid \Data_i=\data_i, \State=\state]\biggr\},
\end{align*}
where we have used the fact that $\Data_i$ is independent from $\State$ given $\Signal_i$, and $\Databf_{-i}$ is independent from $\Signal_i$ given $\Data_i$ and $\State$. Let $\overline{\paymecgen}_i(\data_i;\state)$ denote $\expect_{(\sigma_i',\bm{\sigma}_{-i})}[\paymecgen_i(\Databf)\mid \Data_i=\data_i, \State=\state]$ for $\data_i,\state\in\{0,1\}$. Then $\overline{\paymecgen}_i(\data_i;\state)$ does not depend on the strategy of \player{}~$i$ since
\begin{align*}
&\mspace{23mu}\expect_{(\sigma_i',\bm{\sigma}_{-i})}[\paymecgen_i(\Databf)\mid \Data_i=\data_i, \State=\state]\\
&=\expect_{(\sigma_i',\bm{\sigma}_{-i})}[\paymecgen_i(\data_i,\Databf_{-i})\mid \Data_i=\data_i, \State=\state]\\
&=\expect_{\bm{\sigma}_{-i}}[\paymecgen_i(\data_i,\Databf_{-i})\mid \State=\state],
\end{align*}
where the last equality follows from the conditional independence between $\Data_i$ and $\Databf_{-i}$ given $\State$. With this notation,
\begin{align*}
&\mspace{23mu}\expect_{(\sigma_i',\bm{\sigma}_{-i})}[\paymecgen_i(\Databf)]\\
&=\sum_{\data_i,\signal_i}\biggl\{\Pr_{\sigma_i'}(\Data_i=\data_i\mid \Signal_i=\signal_i)\\
&\mspace{66mu}\cdot\sum_{\state}\Pr(\Signal_i=\signal_i,\State=\state)\overline{\paymecgen}_i(\data_i;\state)\biggr\}.
\end{align*}
Therefore, the expected utility of \player{}~$i$ is given by
\begin{align*}
&\mspace{23mu}U_i(p_1,p_0,q_1,q_0)\\
&=\expect_{(\sigma_i',\bm{\sigma}_{-i})}[R_i(\Databf)-\cost(\zeta(\sigma_i'))]\\
&=K_1p_1+K_0p_0+L_1q_1+L_0q_0-\cost(\zeta(p_1,p_0,q_1,q_0)),
\end{align*}
where
\begin{align*}
K_{\signal_i}=\sum_{\state}\Pr(\Signal_i=\signal_i,\State=\state)\overline{\paymecgen}_i(1;\state),\mspace{6mu} \signal_i\in\{0,1\},\\
L_{\signal_i}=\sum_{\state}\Pr(\Signal_i=\signal_i,\State=\state)\overline{\paymecgen}_i(0;\state),\mspace{6mu} \signal_i\in\{0,1\}.
\end{align*}

Consider a genie-aided mechanism $\widehat{\paymecgenbf}$ defined as follows: for any \player{}~$i$,
\begin{align*}
\widehat{\paymecgen}_i(\data_i,\state)=\overline{\paymecgen}_i(\data_i;\state),\mspace{6mu} \data_i\in\mathcal{\Data},\state\in\{0,1\}.
\end{align*}
Still consider any \player{}~$i$ and any strategy $\sigma_i'$ of \player{}~$i$. Let $\widehat{U}_i(p_1,p_0,q_1,q_0)$ denote the expected utility of \player{}~$i$ at the strategy $\sigma_i'$ when other \players{} follow $\bm{\sigma}_{-i}$. Then
\begin{align*}
&\mspace{23mu}\widehat{U}_i(p_1,p_0,q_1,q_0)\\
&=\expect_{(\sigma_i',\bm{\sigma}_{-i})}\Bigl[\widehat{\paymecgen}_i(\Data_i,\State)-\cost(\zeta(p_1,p_0,q_1,q_0))\Bigr]\\
&=\sum_{\data_i,\signal_i,\state}\Pr_{\sigma_i'}(\Data_i=\data_i\mid \Signal_i=\signal_i)\Pr(\Signal_i=\signal_i,\State=\state)\widehat{\paymecgen}_i(\data_i,\state)\\
&\mspace{23mu}-\cost(\zeta(p_1,p_0,q_1,q_0))\\
&=K_1p_1+K_0p_0+L_1q_1+L_0q_0-\cost(\zeta(p_1,p_0,q_1,q_0)),
\end{align*}
where the last equality follows from the definition of $\widehat{\paymecgen}_i(\data_i,\state)$. Thus, $\widehat{U}_i(p_1,p_0,q_1,q_0)=U_i(p_1,p_0,q_1,q_0)$. Since $\bm{\sigma}$ is a Nash equilibrium of $\paymecgenbf$, the $(p_1,p_0,q_1,q_0)$ that corresponds to $\sigma_i$ maximizes $U_i(p_1,p_0,q_1,q_0)$, which implies that $\sigma_i$ is also a best response of \player{}~$i$ under the genie-aided mechanism $\widehat{\paymecgenbf}$ when other \players{} follow $\bm{\sigma}_{-i}$. Therefore, $\bm{\sigma}$ is also a Nash equilibrium of $\widehat{\paymecgenbf}$, and the expected payment to each \player{} at this equilibrium is the same under $\paymecgenbf$ and $\widehat{\paymecgenbf}$.
\end{proof}

\def\lemNoLower{\ref{lem:lower-P2}}
\section{Proof of Lemma~\lemNoLower}\label{app:proof-lower-P2}
\begin{proof}
We first prove that $\widetilde{\epsilon}$ is well-defined. Let a function $r\colon(0,+\infty)\rightarrow\mathbb{R}$ be defined as
\begin{equation*}
r(\epsilon)=\frac{D(\epsilon)}{\Value_{\mathrm{LB}}(\epsilon)}.
\end{equation*}
Let $P_1^{(\epsilon)}$ and $P_0^{(\epsilon)}$ be the conditional distributions of the reported $\Data_i$ at the $\epsilon$-strategy given $\State=1$ and $\State=0$, respectively, and let $P_{\mathrm{U}}$ be the uniform distribution on $\{0,1\}$. Then note that
\begin{align*}
D(\epsilon)&=D_{\mathrm{KL}}(P_{\mathrm{U}}||P_1^{(\epsilon)})=D_{\mathrm{KL}}(P_{\mathrm{U}}||P_0^{(\epsilon)})\\
&=\frac{1}{2}\ln\frac{(e^{\epsilon}+1)^2}{4(\quality e^{\epsilon}+1-\quality)((1-\quality)e^{\epsilon}+\quality)}.
\end{align*}
Therefore, the function $r$ is continuous on $(0,+\infty)$. Further, the function $r$ attains its maximum value in a bounded subset of $(0,+\infty)$ since for any $\epsilon\in(0,+\infty)$, $r(\epsilon)>0$, and
\begin{align*}
\lim_{\epsilon\rightarrow 0}r(\epsilon)&=0\\
\lim_{\epsilon\rightarrow +\infty}r(\epsilon)&=0.
\end{align*}
The set $\arg\max r(\epsilon)$ is a closed set since it is the inverse image of one point. Therefore, $\widetilde{\epsilon}=\inf\{\arg\max r(\epsilon)\}$ is well-defined.

Now consider any feasible $(N,\epsilon_1,\epsilon_2,\dots,\epsilon_N)$ of \eqref{eq:payment-accuracy-lower}. By the construction of $\widetilde{\epsilon}$, for any \player{}~$i$,
\begin{align*}
\Value_{\mathrm{LB}}(\epsilon_i)\ge \frac{\Value_{\mathrm{LB}}(\widetilde{\epsilon})}{D(\widetilde{\epsilon})} D(\epsilon_i).
\end{align*}
Then
\begin{align*}
\sum_{i=1}^N\Value_{\mathrm{LB}}(\epsilon_i)&\ge \frac{\Value_{\mathrm{LB}}(\widetilde{\epsilon})}{D(\widetilde{\epsilon})}\sum_{i=1}^ND(\epsilon_i)\\
&\ge \frac{\Value_{\mathrm{LB}}(\widetilde{\epsilon})}{D(\widetilde{\epsilon})}\ln(1/\err),
\end{align*}
where the second inequality follows from the feasibility of $(N,\epsilon_1,\epsilon_2,\dots,\epsilon_N)$. By the construction of $\widetilde{N}$,
\begin{equation*}
\widetilde{N}<\frac{\ln(1/\err)}{D(\widetilde{\epsilon})}+1.
\end{equation*}
Therefore,
\begin{align*}
\sum_{i=1}^N\Value_{\mathrm{LB}}(\epsilon_i)\ge (\widetilde{N}-1)\Value_{\mathrm{LB}}(\widetilde{\epsilon}),
\end{align*}
which completes the proof.
\end{proof}

\end{document}